\renewcommand{\Indp}{\advance\skiprule by 1em\advance\skiptext by -1em\advance\leftskip by 1em}
\renewcommand{\Indm}{\advance\skiprule by -1em\advance\skiptext by 1em\advance\leftskip by -1em}
\newtheorem{deff}{Definition}
\newtheorem{proposition}[deff]{Proposition}
\newtheorem{example}{Example}
\newtheorem{remark}{Remark}
\newcommand{\CC}{\mathcal{C}}
\newcommand{\LC}{\mathcal{L}}
\newcommand{\CCpi}{\CC_{\pi}}
\newcommand{\G}{\mathcal{G}}
\newcommand{\Gtilde}{\tilde {\mathcal{G}}}
\newcommand{\eqdef}{\stackrel{\text{def}}{=}}
\newcommand{\xb}{\mbox{\boldmath $x$}}
\newcommand{\yb}{\mbox{\boldmath $y$}}
\newcommand{\zb}{\mbox{\boldmath $z$}}
\newcommand{\ec}{\mathcal{E}}
\newcommand{\smax}{s_{\text{max}}}
\newcommand{\Gt}[1]{\tilde{{\mathcal{G}}}(#1)}
\author{Audrey Tixier\\
{Inria,   Le Chesnay BP 105, 78153 C\'edex} }
\title{Blind identification of an unknown interleaved convolutional code}
\date{}
\begin{document}

\maketitle

\section{Introduction}
\par{\em Blind identification of an unknown code from the observation of noisy codewords.}
We address here a problem related to cryptanalysis and data security where an observer 
wants to extract information from a noisy data stream where the error correcting code which is used is unknown. 
Basically an observer has here several noisy codewords originating from an unknown code and wants to recover
the unknown code and decode it in order to recover the whole information contained in these codewords.
Generally this problem is solved by making assumptions on the code which is used in this scenario (convolutional code,
LDPC code, turbo-code, concatenated code, etc.). This is called the code reconstruction problem or blind identification
of a code problem
in the literature.
This problem arises for instance in a non-cooperative
context where observing a binary sequence originating from an
unknown communication system naturally leads to such a problem, for more details see the introduction of \cite{MGB12a}.
It also arises in the design of cognitive receivers which are able to cope with a great variety of error correcting codes \cite{MGB12a} or
in the study of DNA sequences  when looking for possible error correcting codes in the genetic code \cite{Ros06}.
 This problem has a long history: it has
been addressed for a variety of
codes, linear codes \cite{Val01,BSH06,CT08,CF09},
cyclic codes \cite{ZHSY, WYY10,WHHQ10,YVK14}, 
LDPC codes \cite{CT08,CF09},
convolutional codes  \cite{Ric95,Fil97,Fil00,LSLZ04,DH07,WHZ07,CF09a,CS09,MGB09a,Mar09,MGB12a,MGB12b,ZMGR11,ZGMRR12, ZHSZ14,SZHLZ14, BT14},
turbo-codes \cite{Bar05,Bar07,CS10,CFT10,NAF11,DWJ12b, MGB09b, GMB08, YLP14, LG13, TTS14}, 
BCH codes \cite{LPLS12, WYY11,ZHLSZ13,WL13}, 
Reed-Solomon codes \cite{XWH13,LLWC13,ZWJ13} 
and Reed-Muller codes \cite{KLPSS11}.

We focus here on the problem of reconstructing an unknown code when an interleaver is applied after encoding. 
Recall that this is a commonly used technique to correct burst errors since it provides some sort of time diversity in the coded 
sequence. For instance when the code is a convolutional code, this allows to spread the burst errors
in remote locations and convolutional decoding performs much better. This problem has been already addressed in a series of papers  \cite{LLG09b, GLLL13, JLLG12a, JLLG12b,JYLC11}. All these papers assume that the interleaver is 
structured (a convolutional interleaver \cite{GLLL13, JLLG12a, JLLG12b, LLG09b} or an helical scan interleaver \cite{JYLC11}). It should be said here that the methods used in these papers make heavily use of the 
particular structure of the interleaved that is considered and some of  these methods are highly sensitive to noise
\cite{JLLG12b,GLLL13} or do not apply when there is noise \cite{LLG09b,JLLG12a}.

In this paper, we will focus on the case where the code is a convolutional code and where the interleaver is a block interleaver. 
This pair is used in several standards, for example, in 802.11n, 802.16e, 802.22, GMR-1 and Wimedia. In practice the block interleaver is structured but this structure is not always the same, so to reconstruct the interleaver in all cases we will assume that the interleaver has no particular structure, it is chosen randomly among all possible permutations.
 Our problem can be formalized as follows.

\par{\em Blind identification of an unknown interleaved convolutional code: statement of the problem,  hypotheses and notations.}
The encoding process which is studied in this paper is described in Figure \ref{fig:schema_codage} and consists in taking an information word of length $m k$ and feeding it into an $(n,k)$ convolutional encoder to produce a codeword $\xb$ of length $N = m n$. 
We denote by ${\mathcal C}$ the set of codewords obtained by this convolutional code (in other words this is a convolutional code truncated in its first $N$ entries).
The codeword $\xb$ is then permuted by a fixed block interleaver $\pi$ of length $N$, we denote by $\yb$ the interleaved codeword. $\yb$ is then sent through a binary symmetric channel of crossover probability $p$. At the output of the channel we observe the noisy interleaved codeword $\zb$.

The blind identification process consists in observing $M$ noisy interleaved codewords $\zb^1, \dots, \zb^M$ to recover the convolutional code $\mathcal{C}$ and the block interleaver $\pi$. The codeword and interleaved codeword associated to $\zb^i$ are respectively denoted by $\xb^i$ and $\yb^i$. We also denote by $\mathcal{C}_\pi$ the code $\mathcal{C}$ interleaved by $\pi$ ($\yb^1, \dots, \yb^M$
belong to it).
We will assume that the length $N$ of the interleaver is known. It can obtained through the techniques given in  \cite{RLHY10,SH05a, SHB09, GBMN11} and recovering this length can now be considered to be a solved problem.
However, contrarily to \cite{LLG09b, GLLL13, JLLG12a, JLLG12b,JYLC11} we will make no assumption on the interleaver: it is chosen randomly among all permutations of size $N$. The convolutional code $\mathcal{C}$ is assumed to be unknown, its parameters $n$ and $k$ are also unknown.

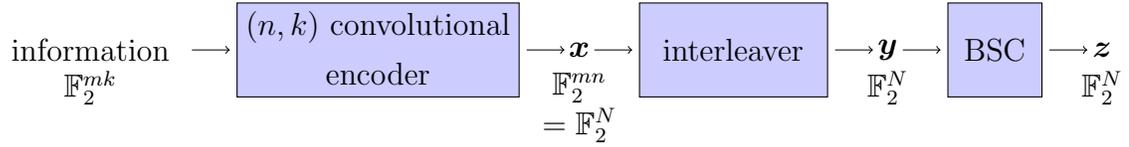
\begin{figure}
\begin{center}
\begin{tikzpicture}
        \tikzstyle discrete=[gray, very thin, font=\scriptsize]
        \def \cote{1.25}
        	\def \fleche{0.5}
        	\def \espace{0.1}

		\draw (0,0) node {information} ;
				\draw (0,-0.5) node {$\mathbb{F}_2^{mk}$} ;
		\draw [->, very thin] (\cote+\espace,0) -- (\cote+\fleche+\espace,0);
		\filldraw[fill=blue!20](\cote+\fleche+2*\espace, -0.5*\cote) rectangle  (4*\cote+\fleche+2*\espace, 0.5*\cote);
		
		\draw (2.5*\cote+\fleche+2*\espace,0.25*\cote) node {$(n,k)$ convolutional} ;
		\draw (2.5*\cote+\fleche+2*\espace,-0.25*\cote) node {encoder} ;		
		
		\draw [->, very thin] (4*\cote+\fleche+3*\espace,0) -- (4*\cote+2*\fleche+3*\espace,0);
		\draw (4*\cote+2*\fleche+5*\espace,0) node {$\xb$};
				\draw (4*\cote+2*\fleche+5*\espace,-0.5) node {$\mathbb{F}_2^{mn}$ };
				\draw (4*\cote+2*\fleche+5*\espace,-1) node{ $= \mathbb{F}_2^N$};
		\draw [->, very thin] (4*\cote+2*\fleche+7*\espace,0) -- (4*\cote+3*\fleche+7*\espace,0);
				
		\filldraw[fill=blue!20](4*\cote+3*\fleche+8*\espace, -0.5*\cote) rectangle node {interleaver} (6*\cote+3*\fleche+8*\espace, 0.5*\cote);

		\draw [->, very thin] (6*\cote+3*\fleche+9*\espace,0) -- (6*\cote+4*\fleche+9*\espace,0);

		\draw (6*\cote+4*\fleche+11*\espace,0) node {$\yb$};
				\draw (6*\cote+4*\fleche+11*\espace,-0.5) node {$\mathbb{F}_2^N$};

		\draw [->, very thin] (6*\cote+4*\fleche+13*\espace,0) -- (6*\cote+5*\fleche+13*\espace,0);

		\filldraw[fill=blue!20](6*\cote+5*\fleche+14*\espace, -0.5*\cote) rectangle node {BSC} (7*\cote+5*\fleche+14*\espace, 0.5*\cote);
		
		\draw [->, very thin] (7*\cote+5*\fleche+15*\espace,0) -- (7*\cote+6*\fleche+15*\espace,0);
		\draw (7*\cote+6*\fleche+17*\espace,0) node {$\zb$};
				\draw (7*\cote+6*\fleche+17*\espace,-0.5) node {$\mathbb{F}_2^N$};
		
\end{tikzpicture}
\caption{The communication scheme considered here.\label{fig:schema_codage}}
\end{center}
\end{figure}

\par{\em Our contribution. }
In this paper, we reconstruct the block interleaver $\pi$ and we recover the convolutional code $\mathcal{C}$. For this, we search for the dual code of $\mathcal{C}_\pi$. This code is recovered by known techniques \cite{CF09} which are adapted to recover parity-check equations of low weight given noisy codewords. 
Once $\mathcal{C}_\pi$ is recovered, we classify  the parity-check equations that have been obtained for $\mathcal{C}_\pi$ into groups.
Each group contains parity-check equations that correspond to parity-check equations of $\mathcal{C}$ whose positions differ by a multiple of $n$. We will 
then focus on a specific group of equations and will be able to reconstruct the convolutional code
and the unknown interleaver at the same time by introducing novel graph techniques in this setting.

By running some experimental tests we have been able to demonstrate the efficiency of this method. For example, for a convolutional code with parity-check equations of weight $6$, after finding a set of the parity-check equations, we reconstruct an interleaver of length $8000$ in less than ten seconds. The time for finding the interleaver once  $\mathcal{C}_\pi$ has been recovered does not depend on the noise level, the noise only impacts searching the  parity-check equations of $\mathcal{C}_\pi$. 
It should be added that the method used to recover these parity-check equations is more efficient than the methods calculating the rank of matrices, in particular when the data are noisy. This allows us to reconstruct efficiently the interleaver even for moderate noise levels.

\section{Overview of the algorithm }

Our reconstruction algorithm makes heavily use of two properties of the parity-check equations of an $(n,k)$ convolutional code
\begin{itemize}
\item[(i)]
for small up to moderate constraint length (which is the case of all convolutional codes used in practice) the parity-check 
equations are of low weight;
\item[(ii)] with the exception of a few parity-check equations involving the first bits, shifts of a parity-check equations by a multiple of $n$ are also parity-check equations of the convolutional code.
\end{itemize}

Interleaving does not destroy the first property but the second property is lost for $\CCpi$.
From now on we denote by $t$ the ``essential'' minimum weight of parity-check equations of $\CC$ (or $\CCpi$). By essential minimum weight 
we mean here that we take the minimum of the weights that appear at least a linear (in $N$) number of times.
We use this definition to discard parity-check equations that could involve the first bits of $\CC$ and that could be of lower weight due
to the zero initialization of the convolutional code. For the reconstruction we need a list $\mathcal{L}$ of these parity-check equations of weight $t$. 
 
Our algorithm basically works as follows
\begin{enumerate}
\item We use the algorithm of \cite{CF09} to find a list $\LC$ of parity-check equations of weight $t$.
\item We classify the parity-check equations of $\LC$ into disjoint groups $\LC_1,\dots,\LC_r$ such that two parity-check equations fall into the
same group if, and only if, they correspond to parity-check equations of $\CC$ which are shifts of each other by a multiple of $n$.
\item Denote by $\LC_1$ the group of parity-check equations of $\LC$ which have the smallest intersection number. The intersection number 
of a parity-check equation $\ec$ is the number of parity-check equations in $\LC$ which have at least one
position in common with $\ec$.
We use this group to recover one parity-check equation of $\CC$ by graph theoretic considerations.
\item We use this parity-check equation of $\CC$ to reorder the parity-check equations in $\LC_1$. Note that 
the structure of the group $\LC_1$ is such that these $\ell$ parity-check equations $\ec_1,\dots,\ec_\ell$
correspond to $\ell$ parity-check equations $\ec'_1,\dots,\ec'_\ell$ of $\CC$ which are shifts of a multiple of $n$ of each other.
This reordering is done in such a way that $\ec'_i$ is the shift by $n(i-1)$ of $\ec'_1$.
\item
This reordering of $\LC_1$ is then used in the last step 
to recover $\pi$.  
\end{enumerate}

\section{Notation}

A parity check equation $\ec$ will be denoted by the set of positions it involves, when we write
$\ec=\{e_1,\dots,e_t\}$ we mean here that this parity-check equation involves the positions $\{e_1,\dots,e_t\}$ of the
code that is considered (which is generally clear from the context).

With set notation applying an interleaver $\pi$ to code positions really amounts to transform a parity-check 
equation $\ec=\{e_1,\dots,e_t\}$ into a parity-check equation $\pi(\ec) \eqdef \{\pi(e_1),\dots,\pi(e_t)\}$.

\section{The reconstruction algorithm in detail}

\subsection{Recovering parity-check equations of weight $t$}
The first step consists in searching for a list $\mathcal{L}$ of parity-check equations of $\mathcal{C}_\pi$ of weight $t$. To obtain this list we apply the algorithm of \cite{CF09}. This method 
allows us to find the parity-check equations even if the observed codewords are noisy.
 
\subsection{Classifying parity-check equations into groups}

We want to classify parity-check equations of $\mathcal{L}$ into disjoint groups $\mathcal{L}_1, \dots, \mathcal{L}_r$ such that the parity-check equations in a group correspond to parity-check equations of $\mathcal{C}$ that are shifts of each other by a multiple of $n$. We say that these parity-check equations are of the same type. 

\begin{deff}[Type of a parity-check equation of $\mathcal{C}$]
$\ec=\{e_1,\dots, e_t\}$ and $\ec'=\{e'_1, \dots, e'_t\}$ two parity-check equations of $\mathcal{C}$ are of the same type if $\ec'$ is a shift by a multiple of $n$ of $\ec$. This means that there exists  $i$ such that $ \{e_1, \dots, e_t\}=\{e'_1+in, \dots, e'_t+in\}$. In such a case  we write $\ec \sim \ec'$.\\
All parity-check equations of the same type define an equivalence class. 
\end{deff}

\textit{Why classify ?} We need to classify parity-check equations of $\mathcal{L}$ because our method uses the regularity of parity-check equations of the convolutional code: shifts of parity-check equations by a multiple of $n$ are also parity-check equations of the convolutional code. A convolutional code can satisfy several types of parity-check equations of the same weight $t$.

\begin{example} The $(2,1)$ convolutional code which satisfies $x_1 + x_2 + x_3 + x_5 + x_6 + x_8= 0$ also satisfies its shifts: $\forall i$, $x_{1+2i} + x_{2+2i}+x_{3+2i} +x_{5+2i}+x_{6+2i}=0$. If we add two consecutive parity-check equations, we obtain another parity-check equation: $x_{1+2i}+x_{2+2i}+x_{4+2i}+x_{6+2i}+x_{7+2i}+
x_{10+2i} = 0$. This equation is verified for all integers $i$. So this code has at least two equivalence classes of parity-check equations: the first is represented by $\ec=\{1,2,3,5,6,8\}$ and the second by $\ec'=\{1,2,4,6,7,10\}$. The weight of all these parity-check equations is $6$. In this case, when we search for parity-check equations of weight $6$ of $\mathcal{C}_\pi$ we find equations corresponding to a mixture of these two types.
\end{example}

If we had directly parity-check equations of  $\mathcal{C}$ instead of parity-check equations of 
$\mathcal{C}_\pi$, then these different types of parity-check equation might get differentiated by their span: 

\begin{deff}[Span of a parity-chek equation]
Let $\ec=\{e_1, \dots, e_t\}$ be a parity-check equation, its span $s^\ec$ is defined by $s^\ec = \max_i (e_i) - \min_i (e_i)+1$.\\
In an equivalence class, all parity-check equations have the same span and we call this quantity the span of the equivalence class.
\end{deff}

Once interleaving this property is lost but the equivalence classes are always present:

\begin{deff}[Type of a parity-check equation of $\mathcal{C}_\pi$]
Two parity-check equations $\ec$ and $\ec'$ of $\mathcal{C}_\pi$ are of the same type if $\pi^{-1}(\ec) \sim \pi^{-1}(\ec')$.
\end{deff}

\textit{How to classify?} Even if we can not use the notion of the span of parity-check equations to 
classify the equations of $\mathcal{C}_\pi$, we will use the related notion of neighbourhood profile: 

\begin{deff}[Neighbourhood profile]
Let $\ec \in \mathcal{L}$, its neighbourhood profile $\mathcal{P}^\ec$ is a vector of length $t$: $\mathcal{P}^\ec = (\mathcal{P}^\ec_1, \dots, \mathcal{P}^\ec_t)$ where $\mathcal{P}^\ec_i=\#\{\ec' \in \mathcal{L}\text{ such as } |\ec \cap \ec'|=i\}$. 
\label{def:profiles}
\end{deff}

In other words, for a parity-check equation $\ec$, $\mathcal{P}^\ec_i$ is equal to the number of parity-check equations which have exactly $i$ positions in common with $\ec$. The number of parity-check equations with at least one position in common with $\ec$ defines its intersection number:

\begin{deff}[Intersection number]
The intersection number $\mathcal{I}^\ec$ of a parity-check equation $\ec \in \mathcal{L}$ is equal to $\mathcal{I}^\ec=\sum_{i\leq t}\mathcal{P}^\ec_i$. 
\end{deff}

\textit{Use profiles to determine the type of parity-check equations.} The point of Definition \ref{def:profiles} is that all parity-check equations of $\mathcal{C}$ of the same type have the same neighbourhood profile, whereas two equations of two different types have (in general) two different neighbourhood profiles. 
 It is also the case after interleaving the parity-check equations of $\mathcal{C}_\pi$.\\
Therefore we can classify parity-check equations into groups using their neighbourhood profiles. Of course, parity-check equations involving extreme positions of $\xb$ (the first or last) do not have exactly the same neighbourhood profile as the other parity-check equations of the same type. These parity-check equations have lost parity-check equations in their neighbourhood. This
motivates to bring in the following partial order on the profile of parity-check equations

\begin{deff}[Partial order on the profiles of parity-check equations]
We define a partial order: $\mathcal{P} \leq \mathcal{P}'$ if $\forall i \leq t$, $\mathcal{P}_i \leq \mathcal{P}'_i$ .
\end{deff}

\textit{Classifying a given parity-check equation.} 
The algorithm for classifying parity-check equations into groups is given by Algorithm \ref{algo:tri_equations}. With this algorithm we also deduce the length $n$ of the convolutional code $\mathcal{C}$.

\begin{algorithm}
  \caption{Classifying parity-check equations and deducing $n$ \label{algo:tri_equations}}
  {\bf input:} $\mathcal{L}$ a set of parity-check equations of $\mathcal{C}_\pi$\\
	{\bf  output:} \begin{itemize}
	\item $\mathcal{L}_1$ a set of parity-check equations of the same type
	\item
	the length $n$ of the convolutional code.
	\end{itemize}
  \For{all $\ec \in \mathcal{L}$}{
  	$\mathcal{P}^\ec \leftarrow $ the neigbourhood profiles of $\ec$
  }
  $\mathcal{P}^{E_1}, \dots, \mathcal{P}^{E_r} \leftarrow $ most frequent profiles in $\{\mathcal{P}^\ec, \ec \in \mathcal{L}\}$\\
  $\mathcal{L}_1, \dots, \mathcal{L}_r \leftarrow \emptyset$\\
  \For{all $\ec \in \mathcal{L}$}{
  	\If{$\mathcal{P}^{E_i}$ is the unique profile such that $\mathcal{P}^\ec \leq \mathcal{P}^{E_i}$}{
  		$\mathcal{L}_i \leftarrow \mathcal{L}_i \cup \{\ec\}$
  	}
  }
  \For{all $i \in \{1,\dots,r\}$}{
  	$\mathcal{I}^{E_i} \leftarrow \sum_{j\leq t} \mathcal{P}^{E_i}_j$
  }
  $\mathcal{L}_1 \leftarrow \mathcal{L}_i$ with $i$ such that $\mathcal{I}^{E_i} = \min_j \mathcal{I}^{E_j}$\\
  $n \leftarrow \lfloor \frac{N}{\#\mathcal{L}_1} \rfloor$\\
  \Return $\mathcal{L}_1$ and $n$
\end{algorithm}

\begin{remark}[Choose a group]
We form $r$ groups, but we just need one of them. We choose a group that minimizes the intersection number of its parity-check equations. This is a heuristic whose rationale is that the group with the smallest intersection number corresponds probably to the  equivalence class with the smallest span. Indeed, in $\mathcal{C}$, the larger the span of a parity-check equation $\ec$ is, the more chances we have that there  are parity-check equations with at least one position in common with $\ec$.\\
\end{remark}

\begin{remark}[Deducing the length $n$ of $\mathcal{C}$]
The number $nb_{eq}$ of parity-check equations in the group that we keep allows us to deduce the size $n$ of the convolutional code $\CC$, $n=\lfloor \frac{N}{nb_{eq}} \rfloor$. This equality is due  to the fact that almost all parity-check equations in this group correspond (after deinterleaving) to shifts by a multiple of $n$ of a single parity-check equation.
\end{remark}

\subsection{Recovering a parity-check equation of the convolutional code}

From now on, we assume that we have a set $\mathcal{L}_1$ of parity-check equations of $\mathcal{C}_\pi$. These parity-check equations are of weight $t$ and in the same equivalence class: they correspond to parity-checks of $\mathcal{C}$ which are shifts of each other by a multiple of $n$.\\

We denote by $\ec_{\mathcal{C}}$ a parity-check equation of $\mathcal{C}$ such that each parity-check equation $\ec$ of $\mathcal{L}_1$ satisfies $\pi^{-1}(\ec) \sim \ec_{\mathcal{C}}$ (that is each parity-check equation $\ec$ of 
$\mathcal{L}_1$ is such that  $\pi^{-1}(\ec)$ is a shift of $\ec_{\mathcal{C}}$). 

The purpose of this subsection is to show how $\ec_{\mathcal{C}}$ can be recovered from the knowledge of $\mathcal{L}_1$.
$\ec_\mathcal{C}$ is the parity-check equation of a sub-code of $\mathcal{C}$, this sub-code is an $(n, n-1)$ convolutional code. To recover this $(n, n-1)$ convolutional code we test each $(n, n-1)$ convolutional code 
that admits a parity-check equation of weight $t$ and with a span less than $\smax$ where $\smax$ is some
chosen constant.
Our strategy is to attach a graph to a set of parity-check equations such that \\
(i) the equivalence class of a parity-check equation $\ec$ of an $(n,n-1)$ convolutional code discriminates the convolutional code\\
(ii) if two sets of parity-check equations differ from each other by a permutation then their associated graphs are isomorphic.

By checking if there is an isomorphism between the graph associated to $\mathcal{L}_1$ and 
the graph associated to shifts of a parity-check equation of an $(n,n-1)$ convolutional code
we will recover the right convolutional code and adding labels to the graph will allow us to
identify the permutation between the two sets of parity-check equations.

\subsubsection*{Graphs associated to $\LC_1$ and $\ec$}
From now on we will use the following notation

\textbf{Notation}
We denote by $\ell$ the number of parity-check equations in $\LC_1$.\\

To the set of parity-check equations $\LC_1$ we associate a labeled graph $\Gt{\LC_1}$ which is defined as follows

\begin{deff}[Graph associated to a set of parity-check equations]
\label{def:graph}
The labeled graph $\Gt{\LC}$ associated to a set $\LC$ of parity-check equations is such that
\begin{itemize}
\item Each parity-check equation of $\LC$ is represented in $\Gt{\LC}$ by a vertex.
\item If $\ec$ and $\ec'$, two parity-check equations of $\LC$, have $k$ positions in common (that is $|\ec \cap \ec'|=k$) then in $\Gt{\LC}$ the two corresponding vertices are connected by $k$ edges.
\item Each edge of $\Gt{\LC}$ is labeled with the number of the position that it represents.
\end{itemize}
When $\LC$ is clear from the context we will just denote this graph by $\Gtilde$.
\end{deff}

\textbf{Notation} We denote by $\G$ the graph $\Gtilde$ without label on edges.\\

\begin{example}
Let $\LC = \{\ec_1, \ec_2, \ec_3, \ec_4\}$ with $\ec_1=\{1,4,6\}$, $\ec_2=\{2,4,5\}$, $\ec_3=\{4,6,7\}$ and $\ec_4=\{2,5,7\}$. The graph $\Gtilde$ associated to $\LC$ is represented on Figure \ref{ex:ensembleEqGraph}.
\end{example}

\begin{figure}
\centering
\begin{tikzpicture}[
  >=stealth',
  shorten >=1pt,
  auto,
  node distance=1.8cm,
  thick,
  main node/.style={circle, thin,fill=blue!20,draw,font=\sffamily\bfseries}
  ]

\node[main node](1){$\ec_1$};
\node[main node](2)[above right of=1]{$\ec_2$};
\node[main node](3)[below right of=1]{$\ec_3$};
\node[main node](4)[above right of=3]{$\ec_4$};


\path[every node/.style={font=\sffamily\footnotesize}, thin]
(1) edge [bend left=15] node[right]{$4$} (3)
	edge [bend right=15] node[left]{$6$} (3)
	edge [bend left=10] node{$4$} (2)

(2) edge [bend left=15] node{$4$} (3)
	edge [bend left=15] node[right]{$2$} (4)
	edge [bend right=15] node[right]{$5$} (4)

(3) edge [bend right=15] node[below]{$7$}(4);
\end{tikzpicture}
\caption{Graph $\Gtilde$ associated to $\LC=\{\ec_1=\{1,4,6\},\ec_2=\{2,4,5\},\ec_3=\{4,6,7\},\ec_4=\{2,5,7\}\}$ \label{ex:ensembleEqGraph}}
\end{figure}

The graph $\Gt{\LC_1}$ associated to $\LC_1$ represents the interleaved sub-code of $\CC$. To recover this sub-code (not interleaved) we test each $(n, n-1)$ convolutional code. This is achieved as follows.
An $(n, n-1)$ convolutional code is defined by a parity-check equation $\ec=\{e_1, \dots, e_t\}$. Using $\ec$ we construct a set $\LC_\ec$ of parity-check equations of this $(n,n-1)$ convolutional code: $\LC_\ec=\{\{e_1+in, e_2+in, \dots, e_t+in\}, -\frac{\ell}{n} \leq i < \frac{\ell}{n}\}$. $\LC_\ec$ contains $\ell$ consecutive parity-check equations obtained by shifts of $\ec$ by a multiple of $n$. Using Definition \ref{def:graph} we associate the graph $\Gt{\LC_\ec}$ to this set $\LC_\ec$. To simplify notation we denote this graph by $\Gtilde^\ec$.\\
We want to compare this graph $\Gtilde^\ec$ to $\Gt{\LC_1}$, it is for this reason that we take $\ell$ parity-check equations in $\LC_\ec$, so the two graphs have the same number of vertices, and we check if they are isomorphic. 

\begin{deff}[Isomorphic graphs]
Two graphs $\Gtilde$ and $\Gtilde'$ are isomorphic if, and only if, there exists a bijective mapping $\phi$ between the vertices of $\Gtilde$ and the vertices of $\Gtilde'$ so that for any pair of vertices $(x,y)$ of $\Gtilde$ there is the 
same number of edges between $x$ and $y$ as there are edges between $\phi(x)$ and $\phi(y)$ in 
$\Gtilde'$.\end{deff}

We we also need a finer definition of isomorphism which is suitable for labeled graphs
\begin{deff}[Equivalent graphs]
Two labeled graphs $\Gtilde$ and $\Gtilde'$ are equivalent if they are isomorphic (call the corresponding mapping $\phi$) and there exists a bijective mapping $\psi$ 
of the labels from one graph to the other so that for any pair of vertices $(x,y)$ of $\Gtilde$ if we denote 
by $\{a_1,\dots,a_s\}$ the (multi)set of labels of the edges between $x$ and $y$, then the edges
between $\phi(x)$ and $\phi(y)$  in $\Gtilde'$ have labels $\{\psi(a_1),\dots,\psi(a_s)\}$.\\
\end{deff}

To recover the parity-check equation $\ec_\CC$ of the sub-code of $\CC$ and the interleaver $\pi$ we use the following proposition

\begin{proposition}
If $\ec = \ec_\CC$ then $\Gtilde^\ec$ and $\Gt{\LC_1}$ are equivalent.
\end{proposition}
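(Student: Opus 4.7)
The plan is to construct explicit vertex and label bijections induced by the permutation $\pi$, together with the global shift that aligns the two ranges of multiples of $n$ used to build $\LC_1$ and $\LC_\ec$.

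First, I would parametrize $\LC_1$. By assumption every $\ec\in\LC_1$ satisfies $\pi^{-1}(\ec)\sim\ec_\CC$, so there is a unique integer $j(\ec)$ with $\pi^{-1}(\ec)=\ec_\CC+j(\ec)n$, where I write $\ec_\CC+jn\eqdef\{e_1+jn,\dots,e_t+jn\}$. Because $\LC_1$ gathers all weight-$t$ parity-check equations of $\CCpi$ of this type, the set $J\eqdef\{j(\ec):\ec\in\LC_1\}$ coincides with the set of integers $j$ for which $\ec_\CC+jn\subseteq\{1,\dots,N\}$, so $J$ is an interval of $\ell$ consecutive integers, say $J=\{a,\dots,a+\ell-1\}$. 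By construction $\LC_\ec$ for $\ec=\ec_\CC$ is likewise indexed by an interval $I=\{b,\dots,b+\ell-1\}$, and I set $c\eqdef a-b$.

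Next, I would introduce the two bijections. Let $\phi$ send the vertex $\ec_\CC+in$ of $\Gtilde^\ec$ (for $i\in I$) to the vertex $\pi(\ec_\CC+(i+c)n)$ of $\Gt{\LC_1}$, and let $\psi$ send any label $p$ appearing in $\Gtilde^\ec$ to $\psi(p)\eqdef\pi(p+cn)$. The map $\phi$ is a bijection because $i\mapsto i+c$ is a bijection $I\to J$, and $\psi$ is a bijection between the two label sets because $\pi$ is a permutation of $\{1,\dots,N\}$ and translation by $cn$ sends $\bigcup_{i\in I}(\ec_\CC+in)$ onto $\bigcup_{j\in J}(\ec_\CC+jn)$.

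The verification then collapses to a one-line calculation. For two vertices $u=\ec_\CC+in$ and $v=\ec_\CC+jn$ of $\Gtilde^\ec$, the multiset of labels of edges between $u$ and $v$ is exactly $u\cap v$. Applying $\psi$ yields $\{\pi(p+cn):p\in u\cap v\}=\pi((u+cn)\cap(v+cn))=\pi(u+cn)\cap\pi(v+cn)=\phi(u)\cap\phi(v)$, which is by definition the multiset of labels of edges between $\phi(u)$ and $\phi(v)$ in $\Gt{\LC_1}$. In particular the edge multiplicities agree, so $\phi$ is already a graph isomorphism and the pair $(\phi,\psi)$ witnesses the full equivalence of labeled graphs.

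The hard part will be justifying the claim that $J$ is exactly an interval of $\ell$ consecutive integers, so that $\#J=\#I=\ell$ and the two ranges can be aligned by the single shift $c$. This uses in an essential way that $\LC_1$ forms a complete equivalence class of parity-check equations of $\CCpi$ (no boundary shift missing) and that $\LC_\ec$ has been constructed with the same cardinality $\ell$. If a handful of boundary equations were missing from $\LC_1$ in practice, the same argument would only give an equivalence between $\Gt{\LC_1}$ and the subgraph of $\Gtilde^\ec$ induced on the corresponding shifts; one would then have to weaken the statement to an equivalence up to these boundary vertices.
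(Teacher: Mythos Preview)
Your proof is correct and follows essentially the same approach as the paper: define the vertex bijection via $\pi$ (mapping each shift of $\ec_\CC$ to its interleaved image) and the label bijection via $\pi$ composed with a global shift, then observe that a bijection preserves intersections so the edge-label multisets match. Your version is in fact more careful than the paper's, which simply asserts $\LC_1=\{\pi(\ec_0),\dots,\pi(\ec_{\ell-1})\}$ and takes $\psi:i\mapsto\pi(i-m^\ec)$ without explicitly justifying the alignment of the two index ranges that you handle via the shift $c$; your closing caveat about missing boundary equations is also apt, since the paper later addresses exactly this practical issue.
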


\begin{proof}
We assume that $\ec = \ec_\CC$. Let $\LC_\ec = \{\ec_0, \dots, \ec_{\ell-1}\}$ be the set of parity-check equations associated to $\ec$, with $\ec_i$ being equal to $\ec_{i-1}$ shifted by $n$. $\ec=\ec_\CC$ so $\LC_1$ contains the same parity-check equations as $\LC_\ec$ but interleaved by the interleaver $\pi$: $\LC_1=\{\pi(\ec_0), \pi(\ec_1), \dots, \pi(\ec_{\ell-1})\}$. Note that the interleaver changes the numbering of positions, not the number of positions in common between two parity-check equations. An isomorphism $\phi$ between vertices of this graph is given by $\phi : \Gtilde^\ec \rightarrow \Gt{\LC_1}$, $\ec_i \mapsto \pi(\ec_i)$, $\forall i < \ell-1$. This shows that that $\Gtilde^\ec$ and $\Gt{\LC_1}$ are isomorphic. If we denote by $m^\ec$ the minimal value such that $\Gtilde^\ec$ contains a vertex representing a parity-check equation involving the position $m^\ec$, the block interleaver $\pi$ gives an isomorphism $\psi$ on labels between $\Gtilde^\ec$ and $\Gtilde^\pi$. $\psi : \Gtilde^\ec \rightarrow \Gtilde(\LC_1)$, $i \mapsto \pi(i-m^\ec)$. We obtain that $\Gtilde^\ec$ and $\Gtilde(\LC_1)$ are equivalent.
\end{proof}

\begin{remark}
If we find a parity-check equation $\ec$ such that $\Gt{\LC_1}$ and $\Gtilde^\ec$ are equivalent, then an isomorphism $\psi$ between labels of these graphs gives the block interleaver $\Pi$ such that $\Pi(\CC)=\CC_\pi$.
\end{remark}

\subsubsection*{Sub-graphs associated to $\LC_1$ and $\ec$}
To check the equivalence we will need auxiliary graphs which are much smaller and that will
in general be sufficient for testing the equivalence between graphs.
More precisely, we use sub-graphs induced by $\Gtilde^\ec$ and $\Gt{\LC_1}$.\\

{\bf Notation} From now on to simplify notation we will denote the graph 
$\Gt{\LC_1}$ by $\Gtilde^\pi$.\\

We will associate six  sub-graphs, $\G^\pi_1, \G^\pi_2, \Gtilde^\pi_2, \G^\ec_1, \G^\ec_2$ and $\Gtilde^\ec_2$, to $\LC_1$ and $\ec$  such that if $\Gtilde^\ec$ and $\Gtilde^\pi$ are equivalent then:
\begin{itemize}
\item $\G^\ec_1$ and $\G^\pi_1$ are isomorphic
\item $\G^\ec_2$ and $\G^\pi_2$ are isomorphic
\item $\Gtilde^\ec_2$ and $\Gtilde^\pi_2$ are equivalent
\end{itemize}

The first graphs $\G^\ec_1$ and $\G^\pi_1$ are not labeled and represent the neighbourhood  of a parity-check equation.\\

To obtain $\G^\pi_1$, we randomly choose a parity-check equation $\ec_0$ in $\LC_1$. $\G^\pi_1$ is the sub-graph of $\G^\pi$ induced by the vertex representing $\ec_0$ and all vertices having at least one edge in common with it.\\
$\G^\ec_1$ is a sub-graph of $\G^\ec$ induced by vertices representing $\ec$ and all its shifts by a multiple of $n$ such that they have at least one position in common with $\ec$. This graph contains only a small number of vertices as shown by

\begin{proposition}
Let $\ec$ be a parity-check equation of an $(n, n-1)$ convolutional code and $s^\ec$ the span of $\ec$. The sub-graph $\G^\ec_1$ associated to $\ec$ contains at most $2\lceil \frac{s^\ec}{n}\rceil -1$ vertices.
(In other words, the parity-check equation $\ec$ has at most $2\lceil \frac{s^\ec}{n}\rceil -1$ parity-check equations in its neighbourhood.)
\end{proposition}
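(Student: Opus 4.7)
The plan is to count, as a function of the integer shift parameter $j$, how many shifts $\ec+jn \eqdef \{e_1+jn,\dots,e_t+jn\}$ can possibly share at least one position with $\ec=\{e_1,\dots,e_t\}$, since by construction the vertices of $\G^\ec_1$ are in bijection with such shifts (including $j=0$, which corresponds to $\ec$ itself).

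First I would translate the overlap condition into a divisibility/difference condition: the shift $\ec+jn$ intersects $\ec$ if and only if there exist indices $a,b\in\{1,\dots,t\}$ with $e_a+jn=e_b$, i.e.\ $jn = e_b-e_a$. By the definition of the span, every difference $e_b-e_a$ satisfies $|e_b-e_a|\leq s^\ec-1$. Consequently any $j$ that produces a vertex of $\G^\ec_1$ must satisfy
\[
|j|\ \leq\ \frac{s^\ec-1}{n}.
\]
The number of integers in that range is at most $2\lfloor (s^\ec-1)/n\rfloor+1$, and this bounds the number of vertices of $\G^\ec_1$.

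The last step is the elementary arithmetic identity
\[
2\left\lfloor \frac{s^\ec-1}{n}\right\rfloor+1\ =\ 2\left\lceil \frac{s^\ec}{n}\right\rceil-1,
\]
which one checks by writing $s^\ec-1=qn+r$ with $0\leq r<n$: then the left-hand side equals $2q+1$, and since $s^\ec=qn+(r+1)$ with $1\leq r+1\leq n$, one also has $\lceil s^\ec/n\rceil=q+1$, giving $2q+1$ on the right-hand side. This yields the claimed bound.

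The argument is short and the only substantive point is the reduction of ``common position'' to ``difference of two positions of $\ec$ equals a multiple of $n$''; after that everything reduces to counting integers in an interval of length $2(s^\ec-1)/n$. The bound is an upper bound only: it is not asserted that every admissible $j$ actually yields a shift intersecting $\ec$, which is consistent with what the proposition claims.
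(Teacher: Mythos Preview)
Your proof is correct and follows essentially the same approach as the paper: both reduce the question to bounding the integer shifts $j$ for which $\ec$ and its shift by $jn$ can intersect, using that any common position forces $|jn|\le s^\ec-1$. The only cosmetic difference is that the paper splits into the cases $i\ge 0$ and $i<0$ and counts $\lceil s^\ec/n\rceil$ and $\lceil s^\ec/n\rceil-1$ shifts respectively, whereas you treat both signs at once via $|j|\le (s^\ec-1)/n$ and then verify the identity $2\lfloor (s^\ec-1)/n\rfloor+1=2\lceil s^\ec/n\rceil-1$ explicitly; your version is arguably a bit tidier but the content is the same.
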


\textbf{Notation} We denote by $\ec^{(i)}$ the parity-check equation equals to $\ec$ shifted by $in$.

\begin{proof}
Assume that the parity-check equation $\ec$ is given by $\ec=\{e_1, \dots, e_t\}$ with $e_1 < e_2 < \dots < e_t$. $s^\ec$ is the span of $\ec$, so $s^\ec = e_t-e_1+1$. $\ec^{(i)}$ is represented in $\G_1^\ec$ if and only if $\ec^{(i)}$ and $\ec$ have at least one position in common, that is $\{e_1, \dots, e_t\} \cap \{e_1+in, \dots, e_t+in\} \neq \emptyset$.\\
For $i \geq 0$ we have $\{e_1, \dots, e_t\} \cap \{e_1+in, \dots, e_t+in\} \neq \emptyset$ if $e_t \geq e_1+in$, that is $0\leq i < \frac{s^\ec}{n}$. So for $i \geq 0$, there are at most $\lceil \frac{s^\ec}{n}\rceil$ parity-check equations which can have positions in commons with $\ec$.\\
If $i < 0$. $\{e_1, \dots, e_t\} \cap \{e_1+in, \dots, e_t+in\} \neq \emptyset$ if $e_t+in \geq e_1$, that is $-\frac{s^\ec}{n} < i < 0$. For $i < 0$, there is at most $\lceil \frac{s^\ec}{n} \rceil-1$ parity-check equations which can have positions in commons with $\ec$.\\
So, the maximal number of parity-check equations which can have a position in common with $\ec$ is equal to $2\lceil \frac{s^\ec}{n}\rceil -1$.
\end{proof}

\begin{proposition} If $\pi^{-1}(\ec_0)$ does not involve the first or last positions, and if $\ec=\ec_\CC$ then $\G_1^\ec$ and $\G_1^\pi$ are isomorphic.
\end{proposition}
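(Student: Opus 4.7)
The plan is to combine the previous proposition---which, under $\ec=\ec_\CC$, supplies an isomorphism $\phi:\Gtilde^\ec\to\Gtilde^\pi$ sending $\ec^{(i)}\mapsto\pi(\ec^{(i)})$---with a translation symmetry of $\G^\ec$. Since every parity-check equation of $\LC_1$ lies in the same equivalence class as $\ec_\CC=\ec$, I can write $\pi^{-1}(\ec_0)=\ec^{(i_0)}$ for some integer $i_0$, so that $\ec_0=\phi(\ec^{(i_0)})$.

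First I would observe that $\phi$, being a graph isomorphism that preserves edge multiplicities, restricts to an isomorphism between the subgraph $H^\ec$ of $\G^\ec$ induced by $\ec^{(i_0)}$ together with all its $\G^\ec$-neighbors and the subgraph $\G_1^\pi$ of $\G^\pi$ induced by $\ec_0$ together with all its $\G^\pi$-neighbors. This reduces the statement to proving $H^\ec \cong \G_1^\ec$, where $\G_1^\ec$ is the analogous neighborhood around $\ec^{(0)}=\ec$.

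The natural candidate for this second isomorphism is the translation $\sigma:\ec^{(k)}\mapsto\ec^{(i_0+k)}$. Edge multiplicities are preserved for free, because the number of edges between $\ec^{(k_1)}$ and $\ec^{(k_2)}$ equals $|\ec^{(k_1)}\cap\ec^{(k_2)}|$, a quantity that depends only on $k_2-k_1$ by shift invariance. What one has to verify is that $\sigma$ induces a bijection on vertex sets: concretely, that the shifts $\ec^{(k)}$ and $\ec^{(i_0+k)}$ with $|k|<s^\ec/n$ all belong to $\LC_\ec$ and correspond to valid parity-check equations. The presence of the shifts around $0$ is automatic, since $\LC_\ec$ is defined as a symmetric window of shifts around $\ec^{(0)}$; the presence of those around $i_0$ is exactly where the hypothesis enters, since ``$\pi^{-1}(\ec_0)$ does not involve the first or last positions'' forces $i_0$ to lie far enough from the edges of the codeword that every neighboring shift $\ec^{(i_0+k)}$ still has all its positions in $[1,N]$ and therefore survives as an actual parity-check equation of $\CC$.

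The step I expect to cost the most care is precisely this last one: turning the geometric hypothesis into a bound on $i_0$ strong enough to exclude boundary truncation on both sides of $\ec^{(i_0)}$ simultaneously, so that $\sigma$ is genuinely defined on the full vertex set of $\G_1^\ec$ and lands in the full vertex set of $H^\ec$. Once this is in place, composing $\phi$ with $\sigma$ yields the desired isomorphism $\G_1^\ec\cong\G_1^\pi$.
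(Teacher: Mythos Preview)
Your approach is correct and essentially the same as the paper's. The paper writes down the composite map $\ec^{(i)}\mapsto\pi(\ec'^{(i)})$ directly (where $\ec'=\ec^{(i_0)}$ in your notation) and uses the hypothesis to ensure that each $\pi(\ec'^{(i)})$ actually lies in $\LC_1$; you arrive at the same map by factoring it as the translation $\sigma$ inside $\G^\ec$ followed by the restriction of the global isomorphism $\phi$ from the preceding proposition, which is a slightly more modular presentation of the identical argument.
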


\begin{proof}
$\ec=\ec_\CC$ (that is $\pi^{-1}(\ec_0) \sim \ec$), we denote by $\ec'$ the parity-check equation $\ec^{(k)}$ such that $\pi^{-1}(\ec_0)=\ec^{(k)}$. If $\pi^{-1}(\ec_0)$ does not involve the first or last positions and if we denote by $I=\{i_1, \dots, i_j\}$ the set of integers such as $\forall i \in I$, $\ec'$ and $\ec'^{(i)}$ have at least one position in common, then $\LC_1$ contains $\pi(\ec'^{(i_1)}), \dots, \pi(\ec'^{(i_j)})$. All these parity-check equations have at least one position in common with $\ec_0$, so they are represented in $\G_1^\pi$ and we have an isomorphism $\phi$ between $\G_1^\ec$ and $\G_1^\pi$ defined by $\phi : \G_1^\ec \rightarrow \G_1^\pi$, $\ec^{(i)} \mapsto \pi(\ec^{(i)})$.
\end{proof}

So the first step to test a given $(n, n-1)$ convolutional code, consists in checking if $\G_1^\ec$ and $\G_1^\pi$ are isomorphic. If $\G_1^\ec$ and $\G_1^\pi$ are not isomorphic then $\ec \neq \ec_\CC$. But these graphs are not enough discriminating, two $(n, n-1)$ convolutional codes, defined by $\ec$ and $\ec'$ can be associated to two isomorphic graphs $\G_1^\ec$ and $\G_1^{\ec'}$.

\begin{example}
For $n=2$, the graph $\G_1^\ec$ associated to the parity-check equation $\ec=\{1,2,4,6,7\}$ is isomorphic to the graph $\G_1^{\ec'}$ associated to the parity-check equation $\ec'=\{1,3,4,6,7\}$. These graphs are represented on Figure \ref{fig:ex:graphesIsomorphes}. An isomorphism between these graphs is defined by $\phi : \G_1^\ec \rightarrow \G_1^{\ec'}$, $\ec^{(i)} \mapsto \ec'^{(i)}$.\\
\end{example}

\begin{figure}
\centering
\begin{tikzpicture}[
  >=stealth',
  shorten >=0.5pt,
  auto,
  node distance=2.0cm,
  thick,
  main node/.style={circle, thin,fill=blue!20,draw,font=\sffamily\bfseries}
  ]

\node[main node](1){$\ec^{(-3)}$};
\node[main node](2)[below right of=1]{$\ec^{(-2)}$};
\node[main node](3)[above right of=2]{$\ec^{(-1)}$};
\node[main node](4)[below  right of=3]{$\ec^{(0)}$};
\node[main node](5)[above right of=4]{$\ec^{(1)}$};
\node[main node](6)[below  right of=5]{$\ec^{(2)}$};
\node[main node](7)[above right of=6]{$\ec^{(3)}$};

\node[main node](8)[below left of=2]{$\ec'^{(-3}$};
\node[main node](9)[below right of=8]{$\ec'^{(-2)}$};
\node[main node](10)[above right of=9]{$\ec'^{(-1)}$};
\node[main node](11)[below  right of=10]{$\ec'^{(0)}$};
\node[main node](12)[above right of=11]{$\ec'^{(1)}$};
\node[main node](13)[below  right of=12]{$\ec'^{(2)}$};
\node[main node](14)[above right of=13]{$\ec'^{(3)}$};

\path[every node/.style={font=\sffamily\footnotesize}, thin]
(1) edge [bend left=15] (2)
	edge [bend right=15] (2)
	edge [bend left=10] (3)
	edge (4)
	
(2) edge [bend left=15] (3)
	edge [bend right=15] (3)
	edge [bend right=10] (4)
	edge (5)

(3) edge [bend left=15] (4)
	edge [bend right=15] (4)
	edge [bend left=10] (5)
	edge (6)
	
(4) edge [bend left=15] (5)
	edge [bend right=15] (5)
	edge [bend right=10] (6)
	edge (7)
	
(5) edge [bend left=15] (6)
	edge [bend right=15] (6)
	edge [bend left=10] (7)
	
(6) edge [bend left=15] (7)
	edge [bend right=15] (7)
	
(8) edge [bend left=15] (9)
	edge [bend right=15] (9)
	edge [bend left=10] (10)
	edge (11)
	
(9) edge [bend left=15] (10)
	edge [bend right=15] (10)
	edge [bend right=10] (11)
	edge (12)

(10) edge [bend left=15] (11)
	edge [bend right=15] (11)
	edge [bend left=10] (12)
	edge (13)
	
(11) edge [bend left=15] (12)
	edge [bend right=15] (12)
	edge [bend right=10] (13)
	edge (14)
	
(12) edge [bend left=15] (13)
	edge [bend right=15] (13)
	edge [bend left=10] (14)
	
(13) edge [bend left=15] (14)
	edge [bend right=15] (14);
\end{tikzpicture}
\caption{Graphs $\G_1^{\ec}$ and $\G_1^{\ec'}$ with $n=2$, $\ec=\{1,2,4,6,7\}$ and $\ec'=\{1,3,4,6,7\}$. \label{fig:ex:graphesIsomorphes}}
\end{figure}

We associate to $\LC_1$ and $\ec$ two other graphs $\G_2^\pi$ and $\G_2^\ec$. These graphs are not labeled and represent the neighbourhood at distance two of a parity-check equation. 

$\G_2^\pi$ is the sub-graph of $\G^\pi$ induced by $\G_1^\pi$ and all vertices having at least one edge in common with a vertex of $\G_1^\pi$. So $\G_2^\pi$ represents the neighbourhood at distance 2 of $\ec_0$ in $\LC_1$.\\
$\G_2^\ec$ is the sub-graph of $\G^\ec$ induced by $\G_1^\ec$ and all vertices having at least one edge in common with a vertex of $\G_1^\pi$. This graph is rather small too as shown by:\\

\begin{proposition}
Let $\ec$ be a parity-check equation of an $(n, n-1)$ convolutional code and $s^\ec$ be the span of $\ec$. The sub-graph $\G^\ec_2$ associated to $\ec$ contains at most $4\lceil \frac{s^\ec}{n}\rceil -3$ vertices. (In other words, the parity-check equation $\ec$ has at most $4\lceil \frac{s^\ec}{n}\rceil -3$ parity-check equations in its neighbourhood at distance 2.)
\end{proposition}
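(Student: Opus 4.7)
The plan is to reuse exactly the counting argument of the previous proposition, applied twice. Let $c \eqdef \lceil s^\ec / n \rceil$. By the previous proposition, $\G_1^\ec$ consists of the vertices $\ec^{(i)}$ for $i$ in the interval of integers $I_1 \eqdef \{-(c-1), \dots, c-1\}$, which has cardinality $2c-1$.

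Next, I would observe that, by the definition of $\G_2^\ec$, a vertex $\ec^{(j)}$ lies in $\G_2^\ec$ if and only if there exists some $\ec^{(i)}$ in $\G_1^\ec$ (i.e.\ some $i \in I_1$) such that $\ec^{(j)}$ and $\ec^{(i)}$ share at least one position. Since $\ec^{(j)}$ and $\ec^{(i)}$ are simply shifts of a common parity-check equation of span $s^\ec$ by amounts differing by $(j-i)n$, the exact same case analysis as in the proof of the previous proposition (on the sign of $j-i$) shows that they share a position iff $|j-i| \leq c-1$. Hence the set of admissible $j$ is
\[
I_2 \eqdef \bigcup_{i \in I_1} \{i - (c-1), \dots, i + (c-1)\}.
\]

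Finally, since $I_1$ is itself an interval of consecutive integers, $I_2$ is the interval $\{-2(c-1), \dots, 2(c-1)\}$, of cardinality $4(c-1) + 1 = 4c - 3 = 4\lceil s^\ec / n \rceil -3$. This is the claimed upper bound on the number of vertices of $\G_2^\ec$.

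There is essentially no obstacle here: the whole argument is a direct extension of the previous one, and the main point to be careful about is just the indexing (in particular the use of the ceiling and the fact that both $I_1$ and each neighbourhood $\{i-(c-1),\dots,i+(c-1)\}$ are intervals of exactly $2c-1$ consecutive integers, so that their union is again an interval and can be sized by its endpoints rather than by a union bound which would give the weaker estimate $(2c-1)^2$).
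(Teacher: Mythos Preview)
The paper states this proposition without proof, so there is no argument to compare against; your proof is the natural extension of the preceding proposition's proof and is correct in substance.

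One small point of precision: the previous proposition only yields an \emph{upper bound} of $2c-1$ vertices for $\G_1^\ec$, so strictly speaking its vertex set is \emph{contained in} $\{\ec^{(i)}:i\in I_1\}$ rather than equal to it, and likewise the case analysis only gives the implication ``share a position $\Rightarrow |j-i|\le c-1$'' rather than an equivalence. Since you are after an upper bound, these one-sided statements are exactly what you need, and your conclusion that the index set of $\G_2^\ec$ is contained in $\{-2(c-1),\dots,2(c-1)\}$, hence has at most $4c-3$ elements, is valid.
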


\begin{proposition}
\label{prop:graphesDistance2Iso}
If $\pi^{-1}(\ec_0)$ does not involve the first or last positions, and if $\ec=\ec_\CC$ then $\G_2^\ec$ and $\G_2^\pi$ are isomorphic.
\end{proposition}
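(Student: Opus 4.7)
The proof will be a direct extension of the argument used for Proposition on $\G_1^\ec$ and $\G_1^\pi$, so the plan is to reuse the same map and show it extends to the larger neighborhood.

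First I would set up notation as in the previous proof: since $\ec=\ec_\CC$, there exists an integer $k$ with $\pi^{-1}(\ec_0)=\ec^{(k)}$. Write $\ec'=\ec^{(k)}$ so that all parity-check equations in $\LC_1$ are of the form $\pi(\ec'^{(i)})$ for integers $i$ in some range determined by the length of the code. Let $I_1=\{i : \ec'^{(i)} \text{ shares a position with } \ec'\}$ and, for each $i\in I_1$, $I_2(i)=\{j : \ec'^{(j)} \text{ shares a position with } \ec'^{(i)}\}$. Then $\G_2^\ec$ is, by construction, the subgraph of $\G^\ec$ induced by the vertex set $V^\ec=\{\ec'^{(j)} : j\in I_1 \cup \bigcup_{i\in I_1}I_2(i)\}$, while $\G_2^\pi$ is the subgraph of $\G^\pi$ induced by $V^\pi=\{\ec_0\}\cup\{\text{vertices adjacent to some vertex adjacent to }\ec_0\}$.

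Next I would define the candidate isomorphism $\phi:V^\ec\to V^\pi$ by $\phi(\ec'^{(j)})=\pi(\ec'^{(j)})$ for every index $j$ in the vertex set of $\G_2^\ec$, and argue in three steps:
\begin{enumerate}
\item \textbf{Well-definedness and range.} Each $\pi(\ec'^{(j)})$ is a genuine element of $\LC_1$, hence a vertex of $\G^\pi$. This is where the hypothesis that $\pi^{-1}(\ec_0)=\ec'$ avoids extreme positions is used: it guarantees that every shift $\ec'^{(j)}$ at graph-distance at most two from $\ec'$ in $\G^\ec$ is still a legitimate shift of $\ec_\CC$ that survives truncation (the spans involved are bounded by $4\lceil s^\ec/n\rceil-3$ by the previous proposition), so $\pi$ is defined on $\ec'^{(j)}$ and $\pi(\ec'^{(j)})$ belongs to $\LC_1$.
\item \textbf{Image equals $V^\pi$.} Since $\pi$ preserves cardinalities of intersections, $\ec'^{(j_1)}$ and $\ec'^{(j_2)}$ share a position iff $\pi(\ec'^{(j_1)})$ and $\pi(\ec'^{(j_2)})$ share a position. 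Thus $\phi$ carries the distance-two neighborhood of $\ec'$ in $\G^\ec$ onto the distance-two neighborhood of $\ec_0$ in $\G^\pi$, i.e.\ $\phi(V^\ec)=V^\pi$. Injectivity is immediate since $\pi$ is a permutation.
\item \textbf{Edge preservation.} For any pair $\ec'^{(j_1)},\ec'^{(j_2)}$ in $V^\ec$, the number of edges between them in $\G_2^\ec$ equals $|\ec'^{(j_1)}\cap\ec'^{(j_2)}|=|\pi(\ec'^{(j_1)})\cap\pi(\ec'^{(j_2)})|$, which is the number of edges between $\phi(\ec'^{(j_1)})$ and $\phi(\ec'^{(j_2)})$ in $\G_2^\pi$.
\end{enumerate}

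The main obstacle, and the only subtle point, is step (1): one has to be precise about what ``does not involve the first or last positions'' means. To make the argument rigorous I would strengthen the informal hypothesis to ``$\pi^{-1}(\ec_0)$ lies sufficiently far from the boundary so that every shift $\ec'^{(j)}$ at graph-distance at most two in $\G^\ec$ is still a parity-check equation of the untruncated convolutional code present in $\LC_1$.'' With the bound on the size of $\G_2^\ec$ given in the preceding proposition, this ``sufficiently far'' is a constant in $n$ and $s^\ec$, and hence for almost every choice of $\ec_0$ (as $N$ grows) the hypothesis is satisfied. Once this is granted, the rest of the argument is a routine repetition of the distance-one case and yields the desired isomorphism.
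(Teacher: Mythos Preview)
Your proposal is correct and follows essentially the same approach as the paper: define $\ec'=\ec^{(k)}$ so that $\pi^{-1}(\ec_0)=\ec'$, and take the isomorphism to be $\ec'^{(j)}\mapsto\pi(\ec'^{(j)})$, using that $\pi$ preserves intersection sizes. If anything, you are more explicit than the paper about the boundary hypothesis and about why the map is a bijection onto the distance-two neighbourhood of $\ec_0$; the paper simply asserts these facts in a single sentence.
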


\begin{proof}
$\ec=\ec_\CC$ (that is $\pi^{-1}(\ec_0) \sim \ec$), we denote by $\ec'$ the parity-check equation $\ec^{(k)}$ such that $\pi^{-1}(\ec_0)=\ec^{(k)}$. If $\pi^{-1}(\ec_0)$ does not involve the first or last positions and if we denote by $I=\{i_1, \dots, i_j\}$ the set of integers such that for all $ i$ in $I$, $\Gtilde_2^\ec$ contains a vertex representing $\ec^{(i)}$, then $\LC_1$ contains $\pi(\ec'^{(i_1)}), \dots, \pi(\ec'^{(i_j)})$. All these parity-check equations are in the neighbourhood at distance 2 of $\ec_0$, so they are represented in $\G_2^\pi$ and we have an isomorphism $\phi$ between $\G_2^\ec$ and $\G_2^\pi$ defined by $\phi : \G_2^\ec \rightarrow \G_2^\pi$, $\ec^{(i)} \mapsto \pi(\ec^{(i)})$.
\end{proof}

The second step to test a given $(n, n-1)$ convolutional code, consists in checking if $\G_2^\ec$ and $\G_2^\pi$ are isomorphic but these graphs are not sufficiently discriminating. Finally we use two small labeled graphs $\Gtilde_2^\ec$ and $\Gtilde_2^\pi$.\\

To obtain $\Gtilde_2^\ec$ and $\Gtilde_2^\pi$ we just add label on edges of $\G_2^\ec$ and $\G_2^\pi$.\\

\begin{proposition}
If $\pi^{-1}(\ec_0)$ does not involve the first or last positions, and if $\ec=\ec_\CC$ then $\Gtilde_2^\ec$ and $\Gtilde_2^\pi$ are equivalent.
\end{proposition}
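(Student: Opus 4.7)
The plan is to upgrade the vertex-level isomorphism $\phi$ supplied by Proposition~\ref{prop:graphesDistance2Iso} to a labeled-graph equivalence by adjoining a compatible label bijection $\psi$ built directly from the interleaver $\pi$. First I would fix the index shift between the two graphs: let $k$ be the integer such that $\pi^{-1}(\ec_0) = \ec^{(k)}$, so that under $\phi$ each vertex $\ec^{(i)}$ of $\G_2^\ec$ corresponds to the vertex $\pi(\ec^{(i+k)}) \in \LC_1 \subseteq \G_2^\pi$. The set of edge labels appearing in $\Gtilde_2^\ec$ is $L^\ec \eqdef \bigcup_{\ec^{(i)} \in \G_2^\ec} \ec^{(i)}$, and the analogous set in $\Gtilde_2^\pi$ is $L^\pi \eqdef \bigcup_{\ec^{(i)} \in \G_2^\ec} \pi(\ec^{(i+k)})$.

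Next I would define $\psi : L^\ec \to L^\pi$ by $\psi(j) \eqdef \pi(j+kn)$. Since $\pi$ is a permutation of $\{1,\dots,N\}$ and translation by $kn$ is injective, $\psi$ is injective; and its image is exactly $L^\pi$ by construction, so $\psi$ is a bijection. The label-preservation condition is then routine: for any pair of vertices $\ec^{(i)}, \ec^{(i')}$ in $\G_2^\ec$, the multiset of edge labels between them is $\ec^{(i)} \cap \ec^{(i')}$, while the multiset of edge labels between $\phi(\ec^{(i)})$ and $\phi(\ec^{(i')})$ in $\Gtilde_2^\pi$ equals
\[
\pi(\ec^{(i+k)}) \cap \pi(\ec^{(i'+k)}) \;=\; \pi\bigl(\ec^{(i+k)} \cap \ec^{(i'+k)}\bigr) \;=\; \{\psi(p) : p \in \ec^{(i)} \cap \ec^{(i')}\},
\]
where the first equality uses bijectivity of $\pi$ and the second uses the fact that $\ec^{(i+k)} \cap \ec^{(i'+k)}$ is the translate by $kn$ of $\ec^{(i)} \cap \ec^{(i')}$, which in turn is immediate from $\ec^{(j)} = \ec + jn$.

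The only real difficulty is bookkeeping: one must be careful that the shift $kn$ which relates the canonical indexing of $\G_2^\ec$ to the indexing inherited from $\ec_0$ in $\G_2^\pi$ is consistently absorbed into the definition of $\psi$ rather than being carried through the multiset identities. Once this is handled, no new ideas beyond Proposition~\ref{prop:graphesDistance2Iso} are required: the argument is just bijectivity of $\pi$ combined with the translation-invariance of the family $\{\ec^{(i)}\}_i$, and the assumption that $\pi^{-1}(\ec_0)$ avoids the extremal positions ensures that every vertex of $\G_2^\ec$ really does have a counterpart in $\G_2^\pi$ so that $\phi$ and $\psi$ are globally defined.
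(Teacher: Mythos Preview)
Your proof is correct and follows essentially the same route as the paper: invoke Proposition~\ref{prop:graphesDistance2Iso} for the vertex isomorphism $\phi$, then build the label bijection $\psi$ from the interleaver $\pi$ composed with the appropriate shift. The paper parameterizes that shift as $-m^\ec + \pi^{-1}(m^\pi)$ (via minimal labels in the two graphs) rather than your $kn$ (via the index $k$ with $\pi^{-1}(\ec_0)=\ec^{(k)}$), but the underlying idea is identical; your explicit verification of the multiset identity $\pi(\ec^{(i+k)})\cap\pi(\ec^{(i'+k)}) = \{\psi(p):p\in\ec^{(i)}\cap\ec^{(i')}\}$ is in fact more complete than what the paper records.
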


\begin{proof}
With Proposition \ref{prop:graphesDistance2Iso} we deduce that $\Gtilde_2^\ec$ and $\Gtilde_2^\pi$ are equivalent. If we denote by $m^\ec$ and $m^\pi$ the minimal values such that $\Gtilde_2^\ec$ and $\Gtilde_2^\pi$ contain a vertex representing a parity-check equation involving respectively positions $m^\ec$ and $m^\pi$, then we have an isomorphism $\psi$ between labels of $\Gtilde_2^\ec$ and $\Gtilde_2^\pi$ : $\psi : \Gtilde_2^\ec \rightarrow \Gtilde_2^\pi$, $i \mapsto \pi(i-m^\ec+\pi^{-1}(m^\pi))$.
\end{proof}

Finally the algorithm used for recovering the parity-check equation $\ec_\CC$ of the sub-code of $\CC$ is Algorithm \ref{algo:reconnaissance_eq}.

\begin{algorithm}
  \caption{Recovering the parity-check equation $\ec_\CC$ \label{algo:reconnaissance_eq}}
  {\bf input:} $\mathcal{L}_1$ a set of parity-check equations of weight $t$ of the same type 
and $n$ the length of $\mathcal{C}$ \\
	{\bf  output:} $L$ a list of parity-check equations such that $\ec_\CC$ can be equal to each of them\\
	$L \leftarrow \emptyset$\\
	$\ec_0 \leftarrow $ choose at random a parity-check equation of $\mathcal{L}_1$\\
	$\G_1^\pi, \G_2^\pi$ and $\Gtilde_2^\pi \leftarrow$ sub-graphs induced by $\ec_0$ and its neighbourhood\\	
  \For{all $\ec$ of weight $t$ and with a span less than $s_{max}$}{
 	$\G_1^\ec \leftarrow$ graph representing the neighbourhood of $\ec$ at distance 1\\
 	\If{$\mathcal{G}_1^\ec$ and $\mathcal{G}_1^\pi$ are isomorphic}{
 		$\mathcal{G}_2^\ec \leftarrow$ graph representing the neighbourhood at distance 2 of $\ec$\\
 		\If{$\mathcal{G}_2^\ec$ and $\mathcal{G}_2^\pi$ are isomorphic}{
 			$\Gtilde_2^\ec \leftarrow$ labeled graph representing the neighbourhood at distance 2 of $\ec$\\
 			\If{$\Gtilde_2^\ec$ and $\Gtilde_3^\pi$ are equivalent}{
 				$L \leftarrow L \cup \{\ec\}$\\
 				}
 			}
 		}
 	}
 	
  \Return $L$
\end{algorithm}

\subsubsection*{Reducing the number of tests}

In fact, these graphs have lots of symmetries, and we do not really need to test all parity-check equations of weight $t$ and with a span less than $s_{max}$.\\
The following proposition allows us to reduce the number of $(n, n-1)$ convolutional code that we have to test.\\

\begin{proposition}
Let $\ec = \{e_1, \dots, e_t\}$ be the parity-check equation of an $(n, n-1)$ convolutional code and $\Gtilde_2^\ec$ be the labelled graph representing the neighbourhood at distance two of $\ec$.\\
$\ec$ can also be represented by a binary vector $b_1\dots b_s$ where $b_i=1$ if $i \in \{e_1, \dots, e_t\}$.

\begin{itemize}
\item The graph $\Gtilde_2^{\ec'}$ associated to $\ec'$ represented by the binary vector $b_s\dots b_1$ is equivalent to $\Gtilde_2^{\ec}$.
\item For all permutations $p=[p_1, \dots, p_n]$ of length $n$, the graph $\Gtilde_2^{\ec'}$ associated to the parity-check equation $\ec'$ represented by the binary vector\\ $p(b_1\dots b_n)p(b_{n+1}\dots b_{2n})\dots p(b_{s-n+1}\dots b_s)$ is equivalent to $\Gtilde_2^\ec$.
\end{itemize}
\label{prop:eqEquivalentes}
\end{proposition}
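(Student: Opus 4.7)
The plan is, for each item, to exhibit a position-level bijection $\sigma$ that commutes with the shift $x\mapsto x+n$. From such a $\sigma$ I read off both a vertex map $\phi$ between the shifts of $\ec$ and the shifts of $\ec'$, and a label map $\psi$ on positions. Because $\sigma$ is a bijection, it preserves intersection sizes between any two shifts, so edge multiplicities in the associated graphs match up; because $\sigma$ commutes with shifts by $n$, it sends the family of shifts of $\ec$ onto the family of shifts of $\ec'$, so $\phi$ is well defined on vertices. The ``distance-two neighbourhood'' condition used to define $\Gtilde_2$ depends only on intersection sizes, hence is preserved by $\phi$, and this upgrades the global isomorphism $\Gtilde^\ec \simeq \Gtilde^{\ec'}$ to the claimed equivalence of the sub-graphs $\Gtilde_2^\ec$ and $\Gtilde_2^{\ec'}$.

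For the reversal item I would take $\sigma(x) = s+1-x$, extended in the obvious way to the integers so that it acts on shifted positions. A direct check gives $\sigma(\ec) = \ec'$ and more generally $\sigma(\ec^{(i)}) = \ec'^{(-i)}$; I then set $\phi(\ec^{(i)}) = \ec'^{(-i)}$ and $\psi = \sigma$. For the block-permutation item, decoding the binary-vector description of $\ec'$ gives $b'_{qn+r} = b_{qn+p_r}$ for $1\leq r\leq n$ and $q\geq 0$, so defining $\sigma(qn+r) = qn+p_r$ one has $\ec = \sigma(\ec')$. This $\sigma$ visibly commutes with translation by $n$, hence $\sigma^{-1}(\ec^{(i)}) = \ec'^{(i)}$ for every $i$; I then set $\phi(\ec^{(i)}) = \ec'^{(i)}$ and $\psi = \sigma^{-1}$.

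The main obstacle is essentially book-keeping: I need to check that the shift indices appearing in the distance-two neighbourhood of the centre of $\Gtilde_2^{\ec}$ land inside the shift range used to build $\Gtilde_2^{\ec'}$, and conversely. This follows from the bound of order $\lceil s^\ec/n\rceil$ on the neighbourhood established earlier in the paper, together with the fact that both bijections $\sigma$ preserve the span $s^\ec$ of the parity-check equation. Everything else — preservation of edge multiplicities between pairs of vertices and correct transfer of position labels to the corresponding edges — is automatic from the definition of $\sigma$ together with the fact that $\sigma$ is a bijection on positions that commutes with the shift by $n$.
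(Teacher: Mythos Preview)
Your proof is correct and follows essentially the same route as the paper: both arguments exhibit an explicit vertex bijection $\phi$ and label bijection $\psi$, and your maps agree with the paper's up to an irrelevant overall shift (the paper writes $\phi(\ec^{(i)})=\ec'^{(i_j-i)}$ and $\psi(k)=m^\ec-k$ for the reversal, and $\phi(\ec^{(i)})=\ec'^{(i)}$, $\psi=P$ for the block permutation). Your presentation is in fact tidier, since you derive both $\phi$ and $\psi$ from a single position-level bijection $\sigma$, whereas the paper just writes the two maps down separately.

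One small terminological slip worth fixing: for the reversal item, your $\sigma(x)=s+1-x$ does not \emph{commute} with the shift $x\mapsto x+n$ but rather anti-commutes, since $\sigma(x+n)=\sigma(x)-n$. You clearly saw this, because you correctly write $\sigma(\ec^{(i)})=\ec'^{(-i)}$; the issue is only that the general framework you announce in the first paragraph (``a bijection $\sigma$ that commutes with the shift'') literally applies only to the second item. The property you actually need and use --- that $\sigma$ carries the family $\{\ec^{(i)}\}_i$ bijectively onto the family $\{\ec'^{(j)}\}_j$ --- holds in both cases (via $i\mapsto -i$ and $i\mapsto i$ respectively), and that is what makes the argument go through.
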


\begin{proof}
\begin{itemize}
\item For the first point, if $\Gtilde_2^{\ec}$ contains $\ec^{(i_1)}, \dots, \ec^{(i_j)}$ then $\Gtilde_2^{\ec'}$ contains $\ec'^{(i_1)}, \dots, \ec'^{(i_j)}$, and between these two graphs we have the isomorphism $\phi$ defined by $\phi : \Gtilde_2^{\ec} \rightarrow \Gtilde_2^{\ec'}$, $\ec^{(i)} \mapsto \ec'^{(i_j-i)}$ for all $i \in \{i_1, \dots, i_j\}$. The isomorphism $\psi$ between labels of these graphs can be defined by $\psi : \Gtilde_2^{\ec} \rightarrow \Gtilde_2^{\ec'}$, $k \mapsto m^\ec-k$ where $m^\ec$ is the maximal value of labels of $\Gtilde_2^{\ec'}$. With these two isomorphisms we deduce that $\Gtilde_2^{\ec}$ and $\Gtilde_2^{\ec'}$ are equivalent.
\item For the second point, if $\Gtilde_2^{\ec}$ contains $\ec^{(i_1)}, \dots, \ec^{(i_j)}$ then $\Gtilde_2^{\ec'}$ contains $\ec'^{(i_1)}, \dots, \ec'^{(i_j)}$, and between these two graphs we have the isomorphism $\phi$ defined by $\phi : \Gtilde_2^{\ec} \rightarrow \Gtilde_2^{\ec'}$, $\ec^{(i)} \mapsto \ec'^{(i)}$ for all $i \in \{i_1, \dots, i_j\}$. We define the permutation $P$ by $P(i) = p(i \mod n)+ \lfloor\frac{i}{n}\rfloor$. The isomorphism $\psi$ on labels defined by $\psi : \Gtilde_2^{\ec} \rightarrow \Gtilde_2^{\ec'}$, $k \mapsto P(k)$ allows us to deduce that $\Gtilde_2^{\ec}$ and $\Gtilde_2^{\ec'}$ are equivalent.
\end{itemize}
\end{proof}

\begin{deff}[Equivalent parity-check equations]
Let $\ec$ and $\ec'$ be two parity-check equations, if using the Proposition \ref{prop:eqEquivalentes} we can deduce that the two graphs $\Gtilde_2^\ec$ and $\Gtilde_2^{\ec'}$ are equivalent we say that $\ec$ and $\ec'$ are equivalent.
\end{deff}

\begin{example}
For $n = 2$, if $s_{max}=20$ and $t=10$ we run only $15\,328$ tests instead of $184\,756$. If we suppose that $s_{max}=30$, $1\,238\,380$ tests are needed instead of $30\,045\,015$.\\
If the sought parity-check equation is $\ec_\CC=\{1,2,3,5,6,7,8,12,13,14\}$ (of weight 10), only 2 parity-check equations produce an isomorphic graph to $\mathcal{G}_1^\pi$ and among them one is equivalent to $\Gtilde_2^\pi$ for $s_{max}=20$ (testing the $15\,328$ parity-check equations takes approximately 1 second). If we take $s_{max}=30$, 4 graphs are isomorphic to $\mathcal{G}_1^\pi$ and 2 are equivalent to $\Gtilde_2^\pi$, (one of them is eliminated later) these tests take less than 3 minutes.\\
\end{example}

\subsubsection*{If the parity-check equation $\ec_\mathcal{C}$ is not recovered}
If no parity-check equation has an equivalent labeled graph with $\Gtilde_2^\pi$ we may have chosen in $\mathcal{L}_1$ a parity-check equation $\ec_0$ which has incomplete graphs (after deinterleaving this parity-check equation involves the first or last positions of $\xb$ or  at least a parity-check equation in its neighbourhood at distance $2$
is missing in $\mathcal{L}_1$).\\

In this case, we randomly choose another parity-check equation in $\mathcal{L}_1$, we compute the new graphs $\mathcal{G}_1^\pi$, $\mathcal{G}_2^\pi$ and $\Gtilde_2^\pi$ representing its neighbourhood at distance one and two, and we test all convolutional codes. \\

\begin{remark}
If $\mathcal{G}_1^\pi$ or $\mathcal{G}_2^\pi$ is incomplete, it is probably not symmetric so, no graph $\mathcal{G}_1^\ec$ or $\mathcal{G}_2^\ec$ can be isomorphic with it and the test of all $(n, n-1)$ convolutional codes is very quickly (we just compare the number of vertices, they have not the same number so they can't be isomorphic).\\
\end{remark}

If $\ec_\mathcal{C}$ can be equal to several parity-check equations $\ec$ we apply the end of the method for each of them.

\subsection{Ordering parity-check equations}
Using $\ec_{\mathcal{C}}$ the parity-check equation previously recovered, we want to order the parity-check equations of $\mathcal{L}_1$. That is, find an ordering $\mathcal{A}=\ec_{a_1}, \dots, \ec_{a_l}$ of these parity-check equations such that $\pi^{-1}(\ec_{a_{i+1}})$ is equal to the shift by $n$ of $\pi^{-1}(\ec_{a_i})$. All parity-check equations of $\mathcal{L}_1$ belong to $\mathcal{A}$ once and only once.\\

To order these parity-check equations we extend the two graphs $\mathcal{G}_2^\pi$ and $\mathcal{G}_2^{\ec_\mathcal{C}}$ and we search for an isomorphism between the vertices of these two extended graphs. This isomorphism give us the ordering $\mathcal{A}$. \\

When we recover the parity-check equation $\ec_\mathcal{C}$ of the sub-code of $\mathcal{C}$ we search for an isomorphism between $\mathcal{G}_2^\pi$ and $\mathcal{G}_2^{\ec_\mathcal{C}}$. Once we know this isomorphism, we also have the bijection $\phi$ between the vertices of these  graphs. This bijection gives us a part of the ordering. Indeed, for all $i$, such that $\mathcal{G}_2^{\ec_{\mathcal{C}}}$ contains a vertex $V_i$ representing $\ec_\mathcal{C}$ shifted by $in$, we place the parity-check equation represented by $\phi(V_i)$ at position $i$ in $\mathcal{A}$. \\

To obtain the bijection using all parity-check equations of $\mathcal{L}_1$ and deduce the entire ordering $\mathcal{A}$, we extend step by step $\mathcal{G}_2^\pi$, $\mathcal{G}_2^{\ec_\mathcal{C}}$ and the bijection $\phi$. We denote by $\G_{a..b}^{\ec_\CC}$ the graph representing $\ec$ shifted by $in$ for all integers $i \in [a,b]$, $\phi_{a..b}$ the isomorphism defined for all integers between $a$ and $b$, and $\G_{a..b}^{\pi}$ the graph $\phi(\G_{a..b}^{\ec_\CC})$.\\

\textit{A step of the extension.}
Knowing $\G_{a..b}^{\ec_\CC}$, $\G_{a..b}^{\pi}$ and $\phi_{a..b}$, we search for $\G_{a..b+1}^{\ec_\CC}$, $\G_{a..b+1}^{\pi}$ and $\phi_{a..b+1}$.

\begin{itemize}
\item To obtain $\G_{a..b+1}^{\ec_\CC}$ from $\G_{a..b}^{\ec_\CC}$ we just add a vertex representing $\ec_\CC$ shifted by $(b+1)n$ and the corresponding edges.
\item We search  in $\LC_1$ for a parity-check equation $\ec_i$ which is not represented in $\G_{a..b}^{\pi}$ and such that if we add a vertex representing this parity-check equation and the corresponding edges to $\G_{a..b}^{\pi}$, then $\phi_{a..b+1}$ defined by $\phi_{a..b+1}(j)=\psi_{a..b}(j)$ for $j \in[a,\dots, b]$ and $\phi_{a..b+1}(b+1)=i$ is an isomorphism between $\G_{a..b+1}^{\ec_\CC}$ and $\G_{a..b}^{\pi}$ extended with $\ec_i$.
\end{itemize}

When we can not extend $\G_{a..b}^{\ec_\CC}$, $\G_{a..b}^{\pi}$ and $\phi_{a..b}$ such that $\G_{a..b+1}^{\ec_\CC}$ and $\G_{a..b+1}^{\pi}$ are isomorphic, we extend these graphs and the isomorphism in the other direction. In other words, we search for $\G_{a-1..b}^{\ec_\CC}$, $\G_{a-1..b}^{\pi}$ and $\phi_{a-1..b}$ from $\G_{a..b}^{\ec_\CC}$, $\G_{a..b}^{\pi}$ and $\phi_{a..b}$.\\

\begin{remark}[Several parity-check equations]
If at a given step, several parity-check equations $\ec_i$ of $\LC_1$ can be chosen, then we extend the two graphs and the isomorphism with feedback and finally we choose the biggest isomorphism and corresponding graphs.\\
\end{remark}

\begin{remark}[No parity-check equation]
If no parity-check equation in $\mathcal{L}_1$ satisfies all conditions, it might be that the sought parity-check equation is not in $\mathcal{L}_1$. This parity-check equation was not found using \cite{CF09}, or not classified in this group (it is an unclassified parity-check equation). In this case, we add a "missing parity-check equation" to $\mathcal{G}_{a..b}^{\pi}$ , that is we add a vertex and edges to respect the regularity of $\mathcal{G}_{a..b}^{\pi}$ and we define $\phi_{a..b+1}(b+1) = "missing"$ or $\phi_{a-1..b}(a-1) = "missing"$. Then we continue the extension of graphs and $\phi$.
\end{remark}

\begin{remark} We do not add more than $\lceil \frac{s^\ec}{n} \rceil -1$ consecutive "missing" parity-check equations in $\mathcal{A}$ because in this case, the next parity-check equation has no position in common with previous parity-check equations.
\end{remark}

\begin{example}
We represent on Figures \ref{fig:exemple_eqManqute1}, \ref{fig:exemple_eqManqute2} and \ref{fig:exemple_eqManqute3} the extension with a missing parity-check equation (in red). There is an edge connecting a vertex lying before the missing parity-check equation to a vertex lying after this missing parity-check equation.
\end{example}

\begin{figure}
\begin{center}
\begin{tikzpicture}[
  >=stealth',
  shorten >=0.8pt,
  auto,
  node distance=0.8cm,
  thick,
  main node/.style={circle, thin,fill=blue!20,draw,font=\sffamily\bfseries}
  ]

\node[main node](2){};
\node[main node](3)[above right of=2]{};
\node[main node](4)[below right of=3]{};
\node[main node](5)[above right of=4]{};
\node[main node](6)[below right of=5]{};
\node[main node](7)[above right of=6]{};
\node[main node](8)[below right of=7]{};

\path[every node/.style={font=\sffamily\footnotesize}, thin]
(2) edge [bend left=20] (3)
	edge (3)
	edge [bend right=20] (3)
	edge [bend right=10] (4)

(3) edge [bend left=20] (4)
	edge (4)
	edge [bend right=20] (4)
	edge [bend left=10] (5)

(4) edge [bend left=20] (5)
	edge (5)
	edge [bend right=20] (5)
	edge [bend right=10] (6)

(5) edge [bend left=20] (6)
	edge (6)
	edge [bend right=20] (6)
	edge [bend left=10] (7)

(6) edge [bend left=20] (7)
	edge (7)
	edge [bend right=20] (7)
	edge [bend right=10] (8)	
	
(7) edge [bend left=20] (8)
	edge (8)
	edge [bend right=20] (8);
	
\end{tikzpicture}
\caption{The starting graph $\mathcal{G}_{a..b}^{\ec}$ \label{fig:exemple_eqManqute1}}
\begin{tikzpicture}[
  >=stealth',
  shorten >=0.8pt,
  auto,
  node distance=0.8cm,
  thick,
  main node/.style={circle, thin,fill=blue!20,draw,font=\sffamily\bfseries},
  second node/.style={circle, thin,fill=red!20,draw,font=\sffamily\bfseries}
  ]

\node[main node](2){};
\node[main node](3)[above right of=2]{};
\node[main node](4)[below right of=3]{};
\node[main node](5)[above right of=4]{};
\node[main node](6)[below right of=5]{};
\node[main node](7)[above right of=6]{};
\node[main node](8)[below right of=7]{};
\node[second node](9)[above right of=8]{};

\path[every node/.style={font=\sffamily\footnotesize}, thin]
(2) edge [bend left=20] (3)
	edge (3)
	edge [bend right=20] (3)
	edge [bend right=10] (4)

(3) edge [bend left=20] (4)
	edge (4)
	edge [bend right=20] (4)
	edge [bend left=10] (5)

(4) edge [bend left=20] (5)
	edge (5)
	edge [bend right=20] (5)
	edge [bend right=10] (6)

(5) edge [bend left=20] (6)
	edge (6)
	edge [bend right=20] (6)
	edge [bend left=10] (7)

(6) edge [bend left=20] (7)
	edge (7)
	edge [bend right=20] (7)
	edge [bend right=10] (8)	
	
(7) edge [bend left=20] (8)
	edge (8)
	edge [bend right=20] (8)
	edge [bend left=10, color=red] (9)	

(8) edge [bend left=20, color=red] (9)
	edge [color = red](9)
	edge [bend right=20, color=red] (9);
	
\end{tikzpicture}
\caption{$\G_{a..b+1}^{\ec}$ contains a missing parity-check equation\label{fig:exemple_eqManqute2}}
\begin{tikzpicture}[
  >=stealth',
  shorten >=0.8pt,
  auto,
  node distance=0.8cm,
  thick,
  main node/.style={circle, thin,fill=blue!20,draw,font=\sffamily\bfseries},
  second node/.style={circle, thin,fill=red!20,draw,font=\sffamily\bfseries}
  ]

\node[main node](2){};
\node[main node](3)[above right of=2]{};
\node[main node](4)[below right of=3]{};
\node[main node](5)[above right of=4]{};
\node[main node](6)[below right of=5]{};
\node[main node](7)[above right of=6]{};
\node[main node](8)[below right of=7]{};
\node[second node](9)[above right of=8]{};
\node[main node](10)[below right of=9]{};
\node[main node](11)[above right of=10]{};
\node[main node](12)[below right of=11]{};
\node[main node](13)[above right of=12]{};

\path[every node/.style={font=\sffamily\footnotesize}, thin]
(2) edge [bend left=20] (3)
	edge (3)
	edge [bend right=20] (3)
	edge [bend right=10] (4)

(3) edge [bend left=20] (4)
	edge (4)
	edge [bend right=20] (4)
	edge [bend left=10] (5)

(4) edge [bend left=20] (5)
	edge (5)
	edge [bend right=20] (5)
	edge [bend right=10] (6)

(5) edge [bend left=20] (6)
	edge (6)
	edge [bend right=20] (6)
	edge [bend left=10] (7)

(6) edge [bend left=20] (7)
	edge (7)
	edge [bend right=20] (7)
	edge [bend right=10] (8)	
	
(7) edge [bend left=20] (8)
	edge (8)
	edge [bend right=20] (8)
	edge [bend left=10, color=red] (9)	

(8) edge [bend left=20, color=red] (9)
	edge [color = red](9)
	edge [bend right=20, color=red] (9)
	edge [bend right=10] (10)	

(9) edge [bend left=20, color=red] (10)
	edge [color = red](10)
	edge [bend right=20, color=red] (10)
	edge [bend left=10, color=red] (11)	

(10) edge [bend left=20] (11)
	edge (11)
	edge [bend right=20] (11)
	edge [bend right=10] (12)

(11) edge [bend left=20] (12)
	edge (12)
	edge [bend right=20] (12)
	edge [bend left=10] (13)

(12) edge [bend left=20] (13)
	edge (13)
	edge [bend right=20] (13);
	
\end{tikzpicture}
\caption{Continue to extend the graph\label{fig:exemple_eqManqute3}}
\end{center}
\end{figure}

At the end we recover the isomorphism between $\G^\ec$ and $\G^\pi$. Indeed, at the end of the extension, $\G^\ec_{a..b}$ is equal to $\G^\ec$ and $\G^\pi_{a..b}$ to $\G^\pi$ probably with additional vertices representing missing parity-check equations.

\subsection{Reconstructing the interleaver}
Now we have the isomorphism between $\G^{\ec_\CC}$ and $\G^\pi$, so to reconstruct the interleaver we need to recover the isomorphism $\psi$ between labels on edges of $\Gtilde^{\ec_\CC}$ and $\Gtilde^\pi$.\\

We recover $\psi$ step by step, at each step we search for a sub-graph of $\Gtilde^{\ec_\CC}$ which has a label $i$ appearing only once, or appearing a different number of times than the other labels. The label of the image by $\phi$ of this edge labeled $i$ gives us $\psi(i)$. Then we remove all edges labeled by $i$ in  $\G^{\ec_\CC}$ and by $\psi(i)$ in $\Gtilde^{\ec_\pi}$.\\

At the end of this extension, we extend $\psi$ with positions which do not appear on graphs but are involved in parity-check equations represented by these graphs.\\

$\psi$ defines the interleaver $\pi$, indeed $\pi(i)= \psi(i+m^{\ec_\CC})$ where $m^{\ec_\CC}$ is the minimal value such that $\G^{\ec_\CC}$ contains a vertex representing a parity-check equation involving the position $m^{\ec_\CC}$.\\

\begin{remark}[Several isomorphisms]
Depending on $\ec_\mathcal{C}$ there might be several bijections between labels of the two graphs. In this case we have several interleavers. For these interleavers only the first and last positions are different. The number of interleavers just depends on $\ec_\mathcal{C}$ and not on the length of $\pi$.\\
\end{remark}

\begin{example}
\label{ex:bijection}
The size of the interleaver is $N = 26$. The two graphs $\Gtilde^{\ec_\CC}$ and $\Gtilde^{\pi}$ are represented on Figures \ref{fig:exemple_permutation1} and \ref{fig:exemple_permutation2}. $\phi$ is defined by $\phi : \Gtilde^{\ec_\CC} \rightarrow \Gtilde^{\pi}$, $\ec_\CC^{(-4)} \mapsto \ec_5$, $\ec_\CC^{(-3)} \mapsto \ec_6$, $\ec_\CC^{(-2)} \mapsto \ec_3$, \dots, $\ec_\CC^{(6)}\mapsto \ec_2$. \\
If we take the sub-graph of $\Gtilde^{\ec_\CC}$ induced by $\ec_\CC^{(-1)}$, $\ec_\CC^{(0)}$ and $\ec_\CC^{(1)}$, then we deduce that $\psi(3)=1$ because the label $3$ appears tree times and no other label appears tree times in this sub-graph. Then we also deduce that $\psi(1)=3$, $\psi(4)=25$ and $\psi(5)=17$. With other sub-graphs we obtain the isomorphism $\psi$ defined by $\psi :\Gtilde^{\ec_\CC} \rightarrow \Gtilde^{\pi}$ : \\
\begin{minipage}{0.16 \linewidth}
$-5 \mapsto 26$,\\
$-3 \mapsto 12$,\\
$-2 \mapsto 8$,\\
$-1 \mapsto 5$,\\
\end{minipage}
\begin{minipage}{0.16 \linewidth}
$0 \mapsto 20$,\\
$1 \mapsto 3$, \\
 $2 \mapsto 15$,\\
 $3 \mapsto 1$,\\
\end{minipage}
\begin{minipage}{0.16 \linewidth}
 $4 \mapsto 25$,\\
 $5 \mapsto 17$,\\
 $6 \mapsto 23$,\\
 $7 \mapsto 6$,\\
\end{minipage}
\begin{minipage}{0.16 \linewidth}
  $8 \mapsto 13$,\\
 $9 \mapsto 11$, \\
 $10 \mapsto 7$, \\
 $11 \mapsto 16$, \\
\end{minipage}
\begin{minipage}{0.16 \linewidth}
 $12 \mapsto 19$,\\
 $13 \mapsto 2$,\\
 $14 \mapsto 21$,\\
 $15 \mapsto 10$. \\
\end{minipage}

With this bijection we deduce a part of the interleaver $\pi$ : \\
$\pi = [...,26,?,12,8,5,20,3,15,1,25,17,23,6,13,11,7,16,19,2,21,10,...]$\\
On the graphs we do not represent the positions involved in a single parity-check equation, we do not have edges with the corresponding label. But we know these values and we use them to determine the first and last positions of $\pi$. For example, the second parity-check equation represented in $\Gtilde^{\ec_\mathcal{C}}$ is $\ec_\CC^{(-3)}=\{-5,-4,-3,-1,0\}$ and the image by $\phi$ of this parity-check equation is $\ec_6=\{5,12,18,20,26\}$. With this parity-check equation, we extend the bijection with $-4 \mapsto 18$ (the only two unused values in these parity-check equations). With the same reasoning we deduce that $16 \mapsto 4$. The first parity-check equation represented on $\Gtilde^{\ec_\mathcal{C}}$ is $\ec_\CC^{(-4)}=\{-7,-6,-5,-3,-2\}$ and the corresponding equation on  $\Gtilde^\pi$  is $\ec_5=\{8,9,12,14,26\}$, with these parity-check equations we deduce that $-7 \mapsto 9$ and $-6 \mapsto 14$ or $-7 \mapsto 14$ and $-6 \mapsto 9$, we have the same indeterminate for the two last positions. So we have 4 possible interleavers:\\ 
$\pi = [14,9,26,18,12,8,5,20,3,15,1,25,17,23,6,13,11,7,16,19,2,21,10,4,22,24]$\\
or $\pi = [9,14,26,18,12,8,5,20,3,15,1,25,17,23,6,13,11,7,16,19,2,21,10,4,22,24]$\\
or $\pi = [14,9,26,18,12,8,5,20,3,15,1,25,17,23,6,13,11,7,16,19,2,21,10,4,24,22]$\\
or $\pi = [9,14,26,18,12,8,5,20,3,15,1,25,17,23,6,13,11,7,16,19,2,21,10,4,24,22]$
Moreover we can also take the mirror of  the isomorphism $\phi$, and we obtain 4 new possible interleavers. \\

\end{example}

\begin{figure}
\begin{tikzpicture}[
  >=stealth',
  shorten >=2pt,
  auto,
  node distance=2cm,
  thick,
  main node/.style={circle, thin,fill=blue!20,draw,font=\sffamily\bfseries},
  second node/.style={circle, thin,fill=red!20,draw,font=\sffamily\bfseries}
  ]

\node[main node](2){\tiny $\ec_\CC^{(-4)}$};
\node[main node](3)[above right of=2]{\tiny $\ec_\CC^{(-3)}$};
\node[main node](4)[below right of=3]{\tiny $\ec_\CC^{(-2)}$};
\node[main node](5)[above right of=4]{\tiny $\ec_\CC^{(-1)}$};
\node[main node](6)[below right of=5]{\tiny $\ec_\CC^{(0)}$};
\node[main node](7)[above right of=6]{\tiny $\ec_\CC^{(1)}$};
\node[main node](8)[below right of=7]{\tiny $\ec_\CC^{(2)}$};
\node[main node](9)[above right of=8]{\tiny $\ec_\CC^{(3)}$};
\node[main node](10)[below right of=9]{\tiny $\ec_\CC^{(4)}$};
\node[main node](11)[above right of=10]{\tiny $\ec_\CC^{(5)}$};
\node[main node](12)[below right of=11]{\tiny $\ec_\CC^{(6)}$};

\path[every node/.style={font=\sffamily\footnotesize}, thin]

(2) edge [bend left=15] node[right]{$-5$} (3)
	edge [bend right=15] node[left]{$-3$}(3)
	edge [bend right=10] node[below]{$-3$}(4)
	edge [bend left=20] node[below]{$-2$}(4)
	
(3) edge [bend left=15] node[left]{$-3$} (4)
	edge [bend right=15] node[right]{$-1$}(4)
	edge [bend right=10] node[above]{$-1$}(5)
	edge [bend left=20] node[above]{$0$}(5)
	
(4) edge [bend left=15] node[right]{$-1$} (5)
	edge [bend right=15] node[left]{$1$}(5)
	edge [bend right=10] node[below]{$1$}(6)
	edge [bend left=20] node[below]{$2$}(6)
	
(5) edge [bend left=15] node[left]{$1$} (6)
	edge [bend right=15] node[right]{$3$}(6)
	edge [bend right=10] node[above]{$3$}(7)
	edge [bend left=20] node[above]{$4$}(7)
	
(6) edge [bend left=15] node[right]{$3$} (7)
	edge [bend right=15] node[left]{$5$}(7)
	edge [bend right=10] node[below]{$5$}(8)
	edge [bend left=20] node[below]{$6$}(8)

(7) edge [bend left=15] node[left]{$5$} (8)
	edge [bend right=15] node[right]{$7$}(8)
	edge [bend right=10] node[above]{$7$}(9)
	edge [bend left=20] node[above]{$8$}(9)

(8) edge [bend left=15] node[right]{$7$} (9)
	edge [bend right=15] node[left]{$9$}(9)
	edge [bend right=10] node[below]{$9$}(10)
	edge [bend left=20] node[below]{$10$}(10)	

(9) edge [bend left=15] node[left]{$9$} (10)
	edge [bend right=15] node[right]{$11$}(10)
	edge [bend right=10] node[above]{$11$}(11)
	edge [bend left=20] node[above]{$12$}(11)
	
(10) edge [bend left=15] node[right]{$11$} (11)
	edge [bend right=15] node[left]{$13$}(11)
	edge [bend right=10] node[below]{$13$}(12)
	edge [bend left=20] node[below]{$14$}(12)

(11) edge [bend left=15] node[left]{$13$} (12)
	edge [bend right=15] node[right]{$15$}(12);
	
\end{tikzpicture}
\caption{The graph $\Gtilde_2^{\ec_\mathcal{C}}$ \label{fig:exemple_permutation1}}
\begin{tikzpicture}[
  >=stealth',
  shorten >=2pt,
  auto,
  node distance=2cm,
  thick,
  main node/.style={circle, thin,fill=blue!20,draw,font=\sffamily\bfseries},
  second node/.style={circle, thin,fill=red!20,draw,font=\sffamily\bfseries}
  ]

\node[main node](2){\small $\ec_5$};
\node[main node](3)[above right of=2]{\small $\ec_6$};
\node[main node](4)[below right of=3]{\small $\ec_3$};
\node[main node](5)[above right of=4]{\small $\ec_8$};
\node[main node](6)[below right of=5]{\small $\ec_4$};
\node[main node](7)[above right of=6]{\small $\ec_9$};
\node[main node](8)[below right of=7]{\small $\ec_{11}$};
\node[main node](9)[above right of=8]{\small $\ec_1$};
\node[main node](10)[below right of=9]{\small $\ec_7$};
\node[main node](11)[above right of=10]{\small $\ec_{10}$};
\node[main node](12)[below right of=11]{\small $\ec_2$};

\path[every node/.style={font=\sffamily\footnotesize}, thin]

(2) edge [bend left=15] node[right]{$26$}(3)
	edge [bend right=15] node[left]{$12$}(3)
	edge [bend right=10] node[below]{$12$}(4)
	edge [bend left=20] node[below]{$8$}(4)
	
(3) edge [bend left=15] node[left]{$12$} (4)
	edge [bend right=15] node[right]{$5$}(4)
	edge [bend right=10] node[above]{$5$}(5)
	edge [bend left=20] node[above]{$20$}(5)
	
(4) edge [bend left=15] node[right]{$5$} (5)
	edge [bend right=15] node[left]{$3$}(5)
	edge [bend right=10] node[below]{$3$}(6)
	edge [bend left=20] node[below]{$15$}(6)
	
(5) edge [bend left=15] node[left]{$3$} (6)
	edge [bend right=15] node[right]{$1$}(6)
	edge [bend right=10] node[above]{$1$}(7)
	edge [bend left=20] node[above]{$25$}(7)
	
(6) edge [bend left=15] node[right]{$1$} (7)
	edge [bend right=15] node[left]{$17$}(7)
	edge [bend right=10] node[below]{$17$}(8)
	edge [bend left=20] node[below]{$23$}(8)

(7) edge [bend left=15] node[left]{$17$} (8)
	edge [bend right=15] node[right]{$6$}(8)
	edge [bend right=10] node[above]{$6$}(9)
	edge [bend left=20] node[above]{$13$}(9)

(8) edge [bend left=15] node[right]{$6$} (9)
	edge [bend right=15] node[left]{$11$}(9)
	edge [bend right=10] node[below]{$11$}(10)
	edge [bend left=20] node[below]{$7$}(10)	

(9) edge [bend left=15] node[left]{$11$} (10)
	edge [bend right=15] node[right]{$16$}(10)
	edge [bend right=10] node[above]{$16$}(11)
	edge [bend left=20] node[above]{$19$}(11)
	
(10) edge [bend left=15] node[right]{$16$} (11)
	edge [bend right=15] node[left]{$2$}(11)
	edge [bend right=10] node[below]{$2$}(12)
	edge [bend left=20] node[below]{$21$}(12)

(11) edge [bend left=15] node[left]{$2$} (12)
	edge [bend right=15] node[right]{$10$}(12);
	
\end{tikzpicture}
\caption{The graph $\Gtilde_2^\ec$ \label{fig:exemple_permutation2}}
\end{figure}

\subsection{Particular cases}
\textit{Indeterminate positions.} If the reconstructed interleaver contains indeterminate positions, we search for these positions using noisy interleaved codewords. At each indeterminate positions we test all possible values. To test a position we reconstruct the missing parity-check equations and we verify the number of noisy interleaved codewords that satisfy these parity-check equations. If this number is less than a threshold (we can take $\frac{1}{2}\frac{1-(1-2\tau)^t}{2}M$) it is not the correct value for this position. 

\begin{example}
The graph associated to $\Gtilde_2^{\ec_\mathcal{C}}$ is on Figure \ref{fig:exemple_permutation1}, and the graph $\Gtilde_2^\pi$ on Figure \ref{fig:exemple_permutation3}. One bijection on edges $\psi$ is the same as in Example \ref{ex:bijection}, but we can not know if $7 \mapsto 6$ and $8 \mapsto 13$ or $7 \mapsto 13$ and $8 \mapsto 6$. To determine the right bijection we reconstruct the missing parity-check equation and we test them with noisy interleaved codewords. We test $7 \mapsto 6$ and $8 \mapsto 13$, in this case the missing equation is $\{6,7,11,17,23\}$, then we test $7 \mapsto 13$ and $8 \mapsto 6$, the missing equation is $\{7,11,13,17,23\}$. With the number of noisy interleaved codewords that satisfy these equations we deduce the 8 possible interleavers as in Example \ref{ex:bijection}.

\end{example}

\begin{figure}
\begin{tikzpicture}[
  >=stealth',
  shorten >=2pt,
  auto,
  node distance=2cm,
  thick,
  main node/.style={circle, thin,fill=blue!20,draw,font=\sffamily\bfseries},
  second node/.style={circle, thin,fill=red!20,draw,font=\sffamily\bfseries}
  ]

\node[main node](2){\small $\ec_5$};
\node[main node](3)[above right of=2]{\small $\ec_6$};
\node[main node](4)[below right of=3]{\small $\ec_3$};
\node[main node](5)[above right of=4]{\small $\ec_8$};
\node[main node](6)[below right of=5]{\small $\ec_4$};
\node[main node](7)[above right of=6]{\small $\ec_9$};
\node[second node](8)[below right of=7]{};
\node[main node](9)[above right of=8]{\small $\ec_1$};
\node[main node](10)[below right of=9]{\small $\ec_7$};
\node[main node](11)[above right of=10]{\small $\ec_{10}$};
\node[main node](12)[below right of=11]{\small $\ec_2$};

\path[every node/.style={font=\sffamily\footnotesize}, thin]

(2) edge [bend left=15] node[right]{$26$}(3)
	edge [bend right=15] node[left]{$12$}(3)
	edge [bend right=10] node[below]{$12$}(4)
	edge [bend left=20] node[below]{$8$}(4)
	
(3) edge [bend left=15] node[left]{$12$} (4)
	edge [bend right=15] node[right]{$5$}(4)
	edge [bend right=10] node[above]{$5$}(5)
	edge [bend left=20] node[above]{$20$}(5)
	
(4) edge [bend left=15] node[right]{$5$} (5)
	edge [bend right=15] node[left]{$3$}(5)
	edge [bend right=10] node[below]{$3$}(6)
	edge [bend left=20] node[below]{$15$}(6)
	
(5) edge [bend left=15] node[left]{$3$} (6)
	edge [bend right=15] node[right]{$1$}(6)
	edge [bend right=10] node[above]{$1$}(7)
	edge [bend left=20] node[above]{$25$}(7)
	
(6) edge [bend left=15] node[right]{$1$} (7)
	edge [bend right=15] node[left]{$17$}(7)
	edge [bend right=10,color=red](8)
	edge [bend left=20,color=red](8)

(7) edge [bend left=15,color=red](8)
	edge [bend right=15,color=red](8)
	edge [bend right=10] node[above]{$6$}(9)
	edge [bend left=20] node[above]{$13$}(9)

(8) edge [bend left=15,color=red](9)
	edge [bend right=15,color=red](9)
	edge [bend right=10,color=red](10)
	edge [bend left=20,color=red](10)	

(9) edge [bend left=15] node[left]{$11$} (10)
	edge [bend right=15] node[right]{$16$}(10)
	edge [bend right=10] node[above]{$16$}(11)
	edge [bend left=20] node[above]{$19$}(11)
	
(10) edge [bend left=15] node[right]{$16$} (11)
	edge [bend right=15] node[left]{$2$}(11)
	edge [bend right=10] node[below]{$2$}(12)
	edge [bend left=20] node[below]{$21$}(12)

(11) edge [bend left=15] node[left]{$2$} (12)
	edge [bend right=15] node[right]{$10$}(12);
	
\end{tikzpicture}
\caption{The graph $\Gtilde_2^\ec$ \label{fig:exemple_permutation3}}
\end{figure}

\textit{Not the right length.} If the reconstructed interleaver has not the right length, it is the case when $\mathcal{L}_1$ does not contain all parity-check equations of the same type, the missing parity-check equations are not classified. To recover the beginning and the end of the interleaver we continue the reconstruction by applying the same steps using unclassified parity-check equations: we extend the graphs $\mathcal{G}_{a..b}^\pi$ and $\mathcal{G}_{a..b}^{\ec_\mathcal{C}}$ then we label these graphs and deduce the entire interleaver.

\section{Experimental results}
We have run several experimental tests for different convolutional codes $\mathcal{C}$ and interleaver sizes $N$.\\
In the first test we used the convolutional code defined by the generator matrix in polynomial form $\mathcal{C}^1 = (1+D+D^2+D^5, 1+D+D^3+D^4+D^6)$. This code satisfies one parity-check equation of weight $8$. With a set of interleaved codewords we search for parity-check equations of weight $8$ of $\mathcal{C}^1_\pi$ using a slightly improved  method of \cite{CF09} (we give in Table \ref{table:nombreMotsEtTempsEq} the number $M$ of codewords that we use and the running time for recovering all parity-check equations), then we applied our method to reconstruct the interleaver and the convolutional code. For these tests we assumed that $s_{max} = 25$, and we give the running time in Table \ref{table:resCodePoids8}.\\

In the next test, the convolutional code was ${\mathcal{C}^2 =(1+D+D^2, 1+D^2+D^3)}$. This code has 5 types of parity-check equation of weight $6$. To test our method with $\mathcal{C}^2$ we assumed that $s_{max}=10$, the running times are also in Table \ref{table:resCodePoids8} and \ref{table:nombreMotsEtTempsEq}.\\

The last test was with the the convolutional code defined by $\mathcal{C}^3 =(1+D^2+D^3+D^5+D^6, 1+D+D^2+D^3+D^6)$. This code satisfies $11$ types of parity-check equations of weight $10$. To reconstruct the interleaver and the convolutional code we assumed $s_{max}=20$, see Table \ref{table:resCodePoids8} and \ref{table:nombreMotsEtTempsEq}.\\

\begin{table}
\begin{center}
\begin{tabular}{c||p{0.5cm}p{1.5cm}p{1.5cm}p{1.5cm}}
$N$ & \multicolumn{4}{c}{running time (in seconds)}\\
 && $\mathcal{C}^1$  & $\mathcal{C}^2$  & $\mathcal{C}^3$ \\
 \hline \hline
$1\,000$ & &5 & 0.2 & 5\\
$2\,000$ & &6 & 0.7 & 10\\
$5\,000$ & &7 & 4 & 60\\
$8\,000$ & &11 & 10 & 130\\
$10\,000$ & &12 & 15 & 185
\end{tabular}
\caption{Running time for ${\mathcal{C}^1 =(1+D+D^2+D^5, 1+D+D^3+D^4+D^6)}$, ${\mathcal{C}^2 =(1+D+D^2, 1+D^2+D^3)}$ and $\mathcal{C}^3 =(1+D^2+D^3+D^5+D^6, 1+D+D^2+D^3+D^6)$\label{table:resCodePoids8}}
\end{center}
\end{table}

\begin{table}
\begin{center}
\begin{tabular}{c||p{1cm}p{1cm}|p{1cm}p{1cm}|p{1cm}p{1cm}}
  & \multicolumn{2}{c}{$\mathcal{C}^1$}  & \multicolumn{2}{c}{$\mathcal{C}^2$}  & \multicolumn{2}{c}{$\mathcal{C}^3$} \\
  $N$ & $M$ & run. time & $M$ & run. time & $M$ & run. time\\
 \hline \hline
$1\,000$ & 400 & 60 & 200 & 3 & 400 & 300\\
$2\,000$ & 500 & 60 & 400 & 8 & 600 & 600 \\
$5\,000$ & 1400 & 600 & 900 & 45 & 1600 & 2000\\
$8\,000$ & 2000 & 1800 & 1300 & 240 & 2400 & 3000\\
$10\,000$ & 2600 & 2700 & 1700 & 300 & 2800 & 9000\\
\end{tabular}
\caption{Running time in seconds for recovering all parity-check equations ${\mathcal{C}^1 =(1+D+D^2+D^5, 1+D+D^3+D^4+D^6)}$, ${\mathcal{C}^2 =(1+D+D^2, 1+D^2+D^3)}$ and $\mathcal{C}^3 =(1+D^2+D^3+D^5+D^6, 1+D+D^2+D^3+D^6)$ without noise \label{table:nombreMotsEtTempsEq}}
\end{center}
\end{table}

In all cases, the interleaver and the convolutional code were reconstructed efficiently. To obtain these running times we used all parity-check equations of weight 8, 6 or 10. 
Recovering all parity-check equations of low weight may take time,  but  our method can be applied without having all parity-check equations. For example, with the first convolutional code $\mathcal{C}^1 =(1+D+D^2+D^5, 1+D+D^3+D^4+D^6)$, we note in Table \ref{table:resPasToutesEquations} the running time for reconstructing the interleaver and the convolutional code in case we have less than $100\%$ of parity-check equations of weight $8$.\
We can see that, for small lengths the time increases rapidly if we do not have all parity-check equations, but for large lengths having all parity-check equations is not necessary to reconstruct to reconstruct efficiently the interleaver and the convolutional code.
\begin{table}
\begin{center}
\begin{tabular}{c||cc}
$N$ & $\%$ of parity-check & running time \\
 & equations  &   (in seconds)\\
 \hline \hline
$1\,000$ & 100 &5\\
& 99 & 7\\
& 96 & 37\\
& 93 & 110\\
\hline
$2\,000$ & 100 &6 \\
& 95 & 13\\
& 93 & 155\\
\hline
$5\,000$ & 100 &7 \\
& 95 & 78\\
\hline
$8\,000$ &100  &11\\
& 97 & 21\\
\hline
$10\,000$ & 100 & 12 \\
& 94 & 63\\
\end{tabular}
\caption{Running time for ${\mathcal{C}^1 =(1+D+D^2+D^5, 1+D+D^3+D^4+D^6)}$\label{table:resPasToutesEquations}}
\end{center}
\end{table}

We also test with noisy interleaver codewords, for the convolutional code $\CC^2$ and the binary symmetric channel of crossover probability $p=0.001$ and $p=0.01$, we note in table \ref{table:nombreMotsEtTempsEqBruit} the running time to recover almost all parity-check equations (more than $96\%$ of them). These times are long but by parallelizing, the running time is divided by as much as executed programs. The running time to reconstruct the convolutional code and the interleaver is the same as in noiseless case.

\begin{table}
\begin{center}
\begin{tabular}{c||p{1cm}p{1.5cm}|p{1cm}p{1.5cm}}
	& \multicolumn{2}{c|}{$p = 0.001$} & \multicolumn{2}{c}{$p = 0.01$}\\
  $N$ & $M$ & runn. time & $M$ & runn. time\\
 \hline \hline
$100$ & 100 & 1 & 100 & 10\\
$200$ & 100 & 3 & 100 & 240\\
$500$ & 300 & 30 & 200 & 4\,000\\
$1\,000$ & 400 & 360 & 200 & 72\,000\\
$2\,000$ & 600 & 16\,000\\
\end{tabular}
\caption{Running time in seconds for recovering all parity-check equations when ${\CC =(1+D+D^2, 1+D^2+D^3)}$ and with a crossover probability $p$ \label{table:nombreMotsEtTempsEqBruit}}
\end{center}
\end{table}

\section{Conclusion}
This paper shows that when an interleaved convolutional code is used, then it can be efficiently reconstructed  
from the knowledge of a few hundred (or thousand) observed noisy codewords in the case of moderate noise 
of the channel by first recovering low-weight codewords in the dual of the interleaved convolutional code and then using this set of dual codewords to recover the convolutional structure and the interleaver.
 This assumption of moderate noise can be removed when the length $N$ of the interleaver
is sufficiently short (say below a few hundred) and is needed to ensure that most low-weight 
codewords are obtained by the slightly improved Cluzeau-Finiasz method \cite{CF09} we used in our 
tests.
Once these parity-check equations are recovered, a graph representing how these parity-check equations intersect
is used to recover at the same time the interleaver and the convolutional code.
This method is really fast, for instance the second phase took less than a few minutes in all our experiments and this 
even for very long  interleavers (up to length $N=10000$).
This method applies to any convolutional code, it just needs convolutional codes that have reasonably low-weight and low-span codewords in the dual of the convolutional code, which 
is the case for virtually all convolutional codes used in practice.

\bibliographystyle{IEEEtran}
\bibliography{biblio}

\end{document}